\newtheorem{theorem}{Theorem}
\newtheorem{lemma}{Lemma}
\newtheorem{observation}[theorem]{Observation}
\newtheorem{corollary}[theorem]{Corollary}
\theoremstyle{definition}
\newtheorem{definition}{Definition}
\newenvironment{customlemma}[1] 
  {\innercustomthm} 
  {\endinnercustomthm}
\newcommand{\E}{\mathbb{E}}
\def \R{\mathbb R}
\newcommand{\sset}[1]{\left\{ #1\right\}}
\newcommand{\ssets}[1]{\{ #1\}}
\newcommand{\fwh}[1]{\; \left| \; #1 \right.}
\newcommand{\fwhs}[1]{\; | \; #1 }
\newcommand{\cards}[1]{| #1 |}
\newcommand{\probability}[1]{\ensuremath{\mathrm{Pr}\left[#1\right]}}
\newcommand{\algoname}[1]{\ensuremath{\text{\rm\sc #1}}}
\DeclareMathOperator*{\expect}{\mathbb E}
\DeclareMathOperator*{\variance}{\mathrm{Var}}
\newcommand{\vecc}[1]{\ensuremath{\mathbf{#1}}}
\newcommand{\opt}{\ensuremath{\mathrm{OPT}}}
\title{The VCG Mechanism for Bayesian Scheduling\footnote{Supported by ERC Advanced Grant 321171 (ALGAME) and EPSRC grant EP/M008118/1.\newline\indent A preliminary version of this paper appeared in WINE'15~\citep{gkyr2015-wine}.}
}
\author{Yiannis Giannakopoulos\thanks{Department of Computer Science, University of Liverpool. Email: \href{mailto:ygiannak@cs.ox.ac.uk}{\nolinkurl{ygiannak@cs.ox.ac.uk}}
\newline\indent
A significant part of this work was carried out while the first author was a PhD student at the University of Oxford.}
\and
Maria Kyropoulou\thanks{Department of Computer Science, University of Oxford. Email: \href{mailto:kyropoul@cs.ox.ac.uk}{\nolinkurl{kyropoul@cs.ox.ac.uk} }}}
\date{March 28, 2017}
\begin{document}
\maketitle
\begin{abstract}
We study the problem of scheduling $m$ tasks to $n$ selfish, unrelated machines in order to minimize the makespan, where the execution times are independent random variables, identical across machines. We show that the VCG mechanism, which myopically allocates each task to its best machine, achieves an approximation ratio of $O\left(\frac{\ln n}{\ln \ln n}\right)$.  This improves significantly on the previously best known bound of $O\left(\frac{m}{n}\right)$ for prior-independent mechanisms, given by Chawla et al.\ [STOC'13] under the additional assumption of Monotone Hazard Rate (MHR) distributions. Although we demonstrate that this is in general tight, if we do maintain the MHR assumption, then we get improved, (small) constant bounds for $m\geq n\ln n$ i.i.d.\ tasks, while we also identify a sufficient condition on the distribution that yields a constant approximation ratio regardless of the number of tasks.
\end{abstract}

\section{Introduction}
We consider the problem of scheduling tasks to machines, where the processing
times of the tasks are \emph{stochastic} and the machines are \emph{strategic}.
The goal is to minimize the expected completion time (a.k.a.\ \emph{makespan})
of any machine, where the expectation is taken over the randomness of the
processing times and the possible randomness of the mechanism. We are interested
in the performance, i.e.\ the expected makespan, of \emph{truthful
mechanisms} compared to the optimal \emph{algorithm} that does not take the
incentives of the machines into consideration. This problem, which we call the
\emph{Bayesian scheduling} problem, was previously considered by \citet{Chawla:2013rm}.
Scheduling problems constitute a very rich and intriguing area of research \cite
{HochbaumHall97}. In one of the most fundamental cases, the goal is to schedule
$m$ tasks to $n$ parallel machines while minimizing the makespan, when the
processing times of the tasks are selected by an adversary in an arbitrary way
and can depend on the machine to which they are allocated. However, the
assumption that the machines will blindly follow the instructions of a central
authority (scheduler) was eventually challenged, especially due to the rapid
growth of the Internet and its use as a primary computing platform. Motivated by
this, in their seminal paper \citet{Nisan:2001aa} introduced a mechanism-design
approach to the scheduling problem: the processing times of the tasks are now
private information of the machines, and each machine declares to the mechanism how much time it requires to execute each task. The mechanism then outputs the allocation of tasks to machines, as well as monetary compensations to the machines for their work, based solely on these declarations. In fact, the mechanism has to decide the output in advance, for any possible matrix of processing times the machines can report. Each machine is assumed to be rational and strategic, so, given the mechanism and the true processing times, its declarations are chosen in order to minimize the processing time/cost it has to spend for the execution of the allocated tasks minus the payment it will receive. In this scenario, the goal is to design a \emph{truthful} mechanism that minimizes the makespan; truthful mechanisms define the allocation and payment functions so that the machines don't have an incentive to misreport their true processing-time capabilities. We will refer to this model as the \emph{prior free} scheduling problem, as opposed to the stochastic model we discuss next.

In the \emph{Bayesian} scheduling problem \cite{Chawla:2013rm}, the time a
specific machine requires in order to process a task is drawn from a distribution.
We consider one of the fundamental questions posed by the algorithmic mechanism
design literature, which is about quantifying the potential performance loss of
a mechanism due to the requirement for truthfulness. In the Bayesian scheduling
setting, this question translates to: \emph{What is the maximum ratio (for any
distribution of processing times) of the expected makespan of the best
truthful mechanism over the expected optimal makespan (ignoring the machines'
incentives)?}

In this paper we tackle this question by considering a well known and natural
truthful mechanism, the \emph{Vickrey-Clarke-Groves mechanism (VCG)} \cite
{Vickrey61,Clarke71,Groves73}. VCG can be defined for very general mechanism
design settings. In the special case of scheduling
unrelated machines, it has a very simple interpretation:
greedily and
myopically allocate each task to a machine that minimizes its processing time.
It is a well known fact that VCG is a truthful mechanism in a very strong sense;
truth-telling is a dominant strategy for the machines. Because of the notorious lack of characterization results for truthfulness for restricted domains such as scheduling, VCG (or more generally, affine maximizers) is the standard and obvious choice to consider for the Bayesian scheduling problem. We stress here that for the scheduling domain (and for any additive domain) the VCG allocation and payments can be computed in polynomial time. Also, it is important to note that VCG is a \emph{prior-independent} mechanism, i.e.\ it does not require any knowledge of the prior distribution from which the processing times are drawn.

Prior-independence is a very strong property, and is an important feature for mechanisms used in stochastic settings. Being robust with respect to prior distributions facilitates applicability in real systems, while at the same time bypasses the pessimistic inapproximability results of worst case analysis. The idea is that we would like the mechanisms we use, without relying on any knowledge of the distribution of the processing times of the tasks, to still perform well compared to the optimal mechanism that is tailored for the particular distribution.

\citet{Chawla:2013rm} were the first to examine the Bayesian scheduling problem while considering the importance for prior-independence. They study the following two mechanisms:
\begin{description}
\item[Bounded overload with parameter $c$] Allocate tasks to machines such that
the sum of the processing times of all tasks is minimized, subject to placing at most $c\frac{m}{n}$ tasks at any machine.
\item[Sieve and bounded overload with parameters $c,\beta,$ and $\delta$] Fix a
partition of the machines into two sets of sizes $(1-\delta)n$ and $\delta n$. Ignoring all processing times which exceed\footnote{Assume you run VCG on the first set of machines plus a dummy machine with processing time $\beta$ on all tasks. The case where a task has processing time equal to $\beta$ can be ignored without loss of generality for the case of continuous distributions.} $\beta$ (i.e.\ setting them equal to infinity), run VCG on the first set of machines. For the tasks that remain unallocated  run the bounded overload mechanism with parameter $c$ on the second set of machines.
\end{description}

The above mechanisms are inspired by maximal-in-range (affine maximizer)
mechanisms \cite{NR07} and threshold mechanisms, as these are essentially the
only non-trivial truthful mechanisms we know for the scheduling domain.
One would expect that the simplest of these mechanisms, which is the VCG
mechanism, would be the first to be considered. Indeed, VCG is the most natural,
truthful, simple, polynomial time computable, and prior-independent mechanism.
Still, the authors in \cite{Chawla:2013rm} design the above mechanisms in an
attempt to prevent certain bad behaviour that VCG exhibits on a specific input
instance and don't examine VCG beyond that point. As we demonstrate in this
paper, however, this specific instance actually constitutes the worst case
scenario for VCG
and we can identify cases where VCG performs considerably better, either by placing a restriction on the number of tasks or by making some additional distributional assumptions.

\paragraph{Our results.}
We prove an asymptotically tight bound of $\Theta\left(\frac{\ln n}{\ln \ln n}\right)$ for the approximation ratio of VCG for the Bayesian scheduling problem under the sole assumption that the machines are a priori identical. This bound is achieved by showing that the worst case input for VCG is actually one where the tasks are all of unit weight (point mass distributions). This resembles a balls-in-bins type scenario from which the bound is implied.

Whenever the processing times of the tasks are i.i.d. and drawn from an MHR continuous distribution, VCG is shown to be $2\left(1+\frac{n\ln n}{m}\right)$-approximate for the Bayesian scheduling problem. This immediately implies a constant bound at most equal to $4$ when $m\geq n\ln n$. We also get an improved bound of $1+\sqrt{2}$ when $m\geq n^2$ using a different approach. For the complementary case of $m\leq n\ln n$, we identify a property of the distribution of processing times such that VCG again achieves a constant approximation.
We observe that important representatives of the class of MHR distributions, that is the uniform distribution on $[0,1]$ as well as exponential distributions, do satisfy this property, so for these distributions VCG is $4$-approximate regardless of the number of tasks. We note however that this is not the case for all MHR distributions.

The continuity assumption plays a fundamental role in the above results. In particular, we give a lower bound of $\Omega\left(\frac{\ln n}{\ln \ln n}\right)$ for the case of i.i.d. processing times that uses a discrete MHR distribution. Finally, we also consider the bounded overload and the sieve and bounded overload mechanisms that were studied by \citet{Chawla:2013rm}, and present some instances that lower-bound their performance.

\paragraph{Related Work.}
One of the fundamental papers on the approximability of scheduling with
unrelated machines is by \citet{LST90} who provide a polynomial time algorithm
that approximates the optimal makespan within a factor of $2$. They also prove
that it is NP-hard to approximate the optimal makespan within a factor of $3/2$
in this setting. In the mechanism design setting, \citet{Nisan:2001aa} prove
that the well known VCG mechanism achieves an $n$-approximation of the optimal
makespan, while no truthful mechanism can achieve approximation ratio better
than $2$. Note that the upper bound immediately carries over to the Bayesian and
the prior-independent scheduling case. The lower bound has been improved by
\citet{CKV09} and \citet{KV13} to $2.61$, while \citet{ADL12} prove the
tightness of the upper bound for deterministic anonymous mechanisms. In contrast
to the negative result on the prior free setting presented in \cite{ADL12}, truthful mechanisms can achieve sublinear approximation when the processing times are stochastic. In fact, we prove here that VCG can achieve a sublogarithmic approximation, and even a constant one for some cases, while similar bounds for other mechanisms have also been presented by \citet{Chawla:2013rm}.

For the special case of related machines, where the private information of each
machine is a single value, \citet{AT01} were the first to give a
$3$-approximation truthful in expectation mechanism, while by now truthful PTAS
are also known~\cite{CK13,DDDR11,Epstein2016}. Putting
computational
considerations aside, the best truthful mechanism in this single-dimensional setting is also optimal. \citet{LS09} managed to prove constant approximation ratio for a special, yet multi-dimensional scheduling problem; they consider the case where the processing times of each task can take one of two fixed values. \citet{Y09} then generalized this result to two-range-values, while together with \citet{LY08} and \citet{L09}, they gave constant (better than $1.6$) bounds for the case of two machines.

\citet{DW15} consider computationally tractable approximations with respect to
the best
(Bayesian) truthful mechanism when the processing times of the tasks follow
distributions (with finite support) that are known to the mechanism designer. In fact the authors provide a reduction of this problem to an algorithmic problem. \citet{CIL12} showed that there can be no approximation-preserving reductions from mechanism design to algorithm design for the makespan objective, however the authors in \cite{DW15} bypass this inapproximability by considering the design of bi-criterion approximation algorithms.

Prior-independent mechanisms have been mostly considered in the context of
optimal auction design, where the goal is to design an auction mechanism that
maximizes the seller's revenue. Inspired by the work of \citet{DRY15}, \citet{DHKN11} and \citet{RTY12} independently provide approximation mechanisms for
multi-dimensional settings, with recent follow-up work by \citet{Goldner2016} and \citet{Azar:2014}.
Moreover, \citet{DRS12} identify conditions under which VCG obtains a constant fraction of the optimal revenue, while \citet{HR09} prove Bulow-Klemperer type results for VCG. Prior robust optimization is also discussed by \citet{S13}.

\citet{Chawla:2013rm} are the first to consider \emph{prior-independent} mechanisms for the (Bayesian) scheduling problem. They introduce two variants of the VCG mechanism and bound their approximation ratios.
In particular, the bounded overload mechanism is prior-independent and achieves a $O(\frac{m}{n})$ approximation of the expected optimal makespan when the processing times of the tasks are drawn from machine-identical MHR distributions. For the case where the processing times of the tasks are i.i.d.\ from an MHR distribution, the authors prove that sieve and bounded overload mechanisms can achieve an $O(\sqrt{\ln n})$ approximation of the expected optimal makespan,
as well as an approximation ratio of $O((\ln \ln n)^2)$ under the additional assumption that there are at least $n\ln n$ tasks. We note that to achieve these improved approximation ratios, a sieve and bounded overload mechanism needs to have access to a small piece of information regarding the distribution of the processing times, in particular the expectation of the minimum of a certain number of draws (in contrast to VCG which requires no distributional information whatsoever).

The VCG mechanism is strongly represented in the above works. Its simplicity
and amenability to practice strongly motivate a detailed analysis of its
performance for the Bayesian scheduling problem. From our results, it turns
out that in general VCG performs better than the previously analyzed
prior-independent mechanisms, applies to wider settings with less restrictions
on the distributions and, of course, it is simpler. To summarize and clarify
this comparison with the previous prior-free mechanisms of
\citet{Chawla:2013rm}, we note that the only case where VCG demonstrates a
worse approximation ratio is when the number of tasks is asymptotically very
close to that of machines, in particular $m=o\left(\frac{\ln n}{\ln\ln
n}n\right)$ and, in addition, we are in a restricted setting where the
execution times have to be drawn from necessarily \emph{non-identical, MHR}
distributions. For
example, for
$m=n$ tasks with processing times drawn from machine-identical MHR
distributions which however differ across tasks, the bounded overload mechanism
of \citet{Chawla:2013rm} would
be constant $O(1)$-approximate, while VCG would have an approximation ratio of
$\Theta\left(\frac{\ln n}{\ln \ln n}\right)$. However, a point worth
mentioning here is that the constant hidden within the $O(1)$-notation above
is $800$ while the one in the upper-bound $O\left(\frac{\ln n}{\ln \ln
n}\right)$ of VCG comes directly from a balls-in-bins analysis and therefore is $1+o(1)$.

\section{Preliminaries and Notation}\label{sec:preliminaries}
Assume that we have $n$ unrelated parallel machines and $m\geq n$ tasks that
need to be scheduled to these machines. Let $t_{ij}$ denote the processing time of task $j$ on machine $i$. In the Bayesian scheduling problem, each $t_{ij}$ is independently drawn from some probability distribution $\mathcal D_{ij}$. In this paper we mainly consider the machine-identical setting, that is the processing times of a specific task $j$  are drawn from the same distribution $D_j$ for all the machines. This is a standard assumption for the problem (see also \cite{Chawla:2013rm}). We also consider the case where both machines and tasks are considered a priori identical, and the processing times $t_{ij}$ are all i.i.d. drawn from the same distribution $\mathcal{D}$. The goal is to design a truthful mechanism that minimizes the expected makespan of the schedule.

We consider the VCG mechanism, the most natural and standard choice for a truthful mechanism. Thus, we henceforth assume that the machines always declare their true processing times. VCG minimizes the total workload by allocating each task to the machine that minimizes its processing time. So, if $\mathbf{\alpha}$ denotes the allocation function of VCG (we omit the dependence on $\vecc t$ for clarity of presentation) then, for any task $j$, $\alpha_{ij}=1$ for some machine $i$ such that $t_{ij}=\min_{i'}\{t_{i'j}\}$, otherwise $\alpha_{ij}=0$. Without loss of generality we assume that in case of a tie, the machine is chosen uniformly at random\footnote{We note here that for continuous distributions, such events of ties occur with zero probability.}. The expected makespan of VCG is then computed as
$$
\E\left[\algoname{VCG}(\vecc t)\right] =\E\left[\max_i\sum_{j=1}^m\alpha_{ij}t_{ij} \right].
$$
In what follows, we use variable $Y_{i,j}$ to denote the processing time of task $j$ on machine $i$ under VCG, that is $Y_{i,j}=\alpha_{ij}t_{ij}$. We also denote by $Y_{i}=\sum_{j=1}^mY_{i,j}$ the workload of machine $i$.

Note that in the machine-identical setting $\alpha_{ij}=1$ with probability
$\frac{1}{n}$ for any task $j$. So, VCG exhibits a balls-in-bins type behaviour
in this setting, as the machine that minimizes the processing time of each task,
and hence, the machine that will be allocated the task, is chosen uniformly at
random for each task. We thus know from traditional balls-in-bins analysis, that
the expected maximum number of tasks that will be allocated to any machine will
be $\Theta\left(\frac{\ln n}{\ln \ln n}\right)$, whenever $m=\Theta(n)$. For
more precise balls-in-bins type bounds see \citet{Raab:1998ab}. We will use the
following theorem to prove in Section~\ref{sec:upper-bounds} that the above
instance that yields the $\Theta\left(\frac{\ln n}{\ln \ln n}\right)$ bound is
actually the worst case scenario for VCG:

\begin{theorem}[\citet{Berenbrink:2008ab}]
\label{th:max_load_majorization}
Assume two vectors $\vecc w\in \R^{m}$, $\vecc w'\in \R^{m'}$ with $m\leq m'$ and their values in non-increasing order (that is $w_1\geq w_2\geq \ldots \geq w_m$ and $w_1'\geq w_2'\geq \ldots \geq w'_{m'}$). If the following two conditions hold:
\begin{itemize}
\item[(i)] $\sum_{j=1}^m w_j = \sum_{j=1}^{m'} w_j'$
\item[(ii)] $\sum_{j=1}^k w_j\geq \sum_{j=1}^k w_j'$ \quad for all $k\in[m]$,
\end{itemize}
then the expected maximum load when allocating $m$ balls with weights according to $\vecc w$ is at least equal to the expected maximum load when allocating $m'$ balls with weights according to $\vecc w'$, uniformly at random to the same number of bins.
\end{theorem}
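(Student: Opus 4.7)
The plan is to reduce the statement to the standard majorization setup and then prove Schur-convexity of the expected maximum load via elementary (Robin Hood) transfers.

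First, I would normalize the two vectors to have the same length: pad $\vecc{w}$ with $m'-m$ zeros at the end to obtain $\tilde{\vecc{w}}\in\R^{m'}$, which is still non-increasingly ordered. Throwing $m$ balls with weights $\vecc{w}$ uniformly into the bins is distributionally identical, as far as the maximum load is concerned, to throwing $m'$ balls with weights $\tilde{\vecc{w}}$, since the extra zero-weight balls never change the load of any bin. Conditions (i) and (ii) then give classical majorization $\tilde{\vecc{w}}\succeq \vecc{w}'$: for $k\leq m$ this is exactly (ii), and for $m<k\leq m'$ it follows from $\sum_{j=1}^{k}\tilde{w}_j=\sum_{j=1}^m w_j=\sum_{j=1}^{m'}w'_j\geq \sum_{j=1}^k w'_j$.

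Next, I would invoke the Hardy--Littlewood--P\'olya characterization: $\tilde{\vecc{w}}\succeq\vecc{w}'$ iff $\vecc{w}'$ can be reached from $\tilde{\vecc{w}}$ by a finite sequence of elementary transfers $(w_i,w_j)\mapsto(w_i-\varepsilon,w_j+\varepsilon)$ with $w_i>w_j$ and $\varepsilon\leq(w_i-w_j)/2$. It then suffices to show that each such transfer weakly decreases the expected maximum load. For this, couple the two ball-throwing processes by using the same uniformly random bin for every ball. Condition on the destinations of all balls other than $i$ and $j$, and denote by $L_k^{-}$ the resulting load of bin $k$. If balls $i$ and $j$ happen to land in the same bin (an event that depends on the coupling but not on the split of mass between them), the max load depends only on the unchanged sum $w_i+w_j$ and is therefore unaffected. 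Otherwise, let $\{a,b\}$ be the (unordered) pair of bins they occupy, set $u=L_a^{-}$, $v=L_b^{-}$, $M=\max_{k\notin\{a,b\}}L_k^{-}$, and average over which of the two balls lands in $a$ (by exchangeability of the uniform allocation) to obtain the conditional expectation
\[
g(w_i,w_j)=\tfrac{1}{2}\max(u+w_i,\,v+w_j,\,M)+\tfrac{1}{2}\max(u+w_j,\,v+w_i,\,M).
\]
Each of the two terms is a pointwise maximum of affine functions of $(w_i,w_j)$, hence convex; and $g$ is symmetric in $(w_i,w_j)$ by construction. A convex symmetric function is Schur-convex, so the Robin Hood transfer $(w_i,w_j)\to(w_i-\varepsilon,w_j+\varepsilon)$ weakly decreases $g$. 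Taking expectations over the coupling completes the elementary step, and iterating over the HLP decomposition finishes the proof.

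The conceptual content is fully captured by the two ingredients above (padding-to-majorization, and Schur-convexity via convexity$+$symmetry); the only place a bit of care is needed is the exchangeability argument that lets us pass from the ordered pair of destinations to the unordered pair $\{a,b\}$ and average the two assignments of $i,j$. This is where the assumption that both processes use \emph{uniform} bin selection is actually used; everything else is pure majorization theory.
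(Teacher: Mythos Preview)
The paper does not give its own proof of this statement: Theorem~\ref{th:max_load_majorization} is quoted verbatim from \citet{Berenbrink:2008ab} and used as a black box in the proof of Theorem~\ref{thm:ub_identical_machines}. So there is no in-paper argument to compare against.

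That said, your proposal is a correct and self-contained proof. The padding step is exactly the right way to reduce the two-length statement to ordinary majorization of vectors in $\R^{m'}$ (implicitly using nonnegativity of the weights so that appending zeros keeps the vector non-increasing and so that $\sum_{j=1}^{m'}w'_j\geq\sum_{j=1}^k w'_j$ for $k>m$; this is harmless here since the weights are processing times). The HLP/Muirhead decomposition into elementary transfers is standard. The core step---conditioning on the bins of all other balls and on the \emph{unordered} pair $\{a,b\}$ for balls $i,j$, then averaging the two equally likely assignments---is the crucial place where uniformity enters, and you identify it correctly. The resulting conditional expectation $g(w_i,w_j)$ is symmetric and, as a sum of maxima of affine functions, convex, hence Schur-convex; this yields the monotonicity under a single Robin Hood transfer, and iteration finishes the argument. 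Note that the sample-path inequality fails (the coupling alone does not give a pointwise comparison), so the symmetrization over the two assignments is genuinely doing the work; your write-up handles this correctly.
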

Following \cite{Berenbrink:2008ab} we say that vector $\vecc w$ \emph{majorizes}
$\vecc w'$ whenever $\vecc w$ and $\vecc w'$ satisfy Conditions (i) and (ii) of Theorem~\ref{th:max_load_majorization}.

\paragraph{Probability preliminaries.} We now give some additional notation regarding properties of distributions that will be used in the analysis.

Let $T$ be a random variable following a probability distribution $\mathcal D$. Assuming we perform $n$ independent draws from $\mathcal D$, we use  $T[r:n]$ to denote the $r$-th order statistic (the $r$-th smallest) of the resulting values, following the notation from~\cite{Chawla:2013rm}. In particular, $T[1:n]$ will denote the minimum of $n$ draws from $\mathcal D$, while $T[1:n][m:m]$ denotes the maximum value of $m$ independent experiments where each one is the minimum of $n$ draws from $\mathcal{D}$. Note that for $t_{ij}\sim \mathcal{D}_j$, the expected processing time of machine $i$ for task $j$ under VCG is
\begin{equation}
\label{eq:VCG-expect-helper1}
\expect[Y_{i,j}]=\probability{\alpha_{ij}=1}\expect\left[t_{ij}\fwh{\alpha_{ij}=1}\right]=\frac{1}{n}\expect[T[1:n]].
\end{equation}

In this work we also consider the class of probability distributions that have a \emph{monotone hazard rate (MHR)}. A continuous distribution with pdf $f$ and cdf $F$ is MHR if its hazard rate $h(x)=\frac{f(x)}{1-F(x)}$ is a (weakly) increasing function. The definition of discrete MHR distributions is similar, only the hazard rate of a discrete distribution is defined as $h(x)=\frac{\probability{X= x}}{\probability{X\geq x }}$ (see e.g.\ \citet{Barlow:1963fk}).  The following two technical lemmas demonstrate properties of MHR distributions. The proofs can be found in Appendix~\ref{append:preliminaries}.

\begin{lemma}
\label{lemma:first_order_statistic_MHR}
If $T$ is a continuous MHR random variable, then for every positive integer $n$, its first order statistic $T[1:n]$ is also MHR.
\end{lemma}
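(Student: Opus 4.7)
The plan is a direct computation. Since $T[1:n]$ is the minimum of $n$ i.i.d.\ copies of $T$, the survival function factorises cleanly: the event $\{T[1:n] > x\}$ coincides with the intersection of $n$ independent events $\{T > x\}$, so
\[
1 - F_{1:n}(x) = (1-F(x))^n.
\]
Differentiating gives the density $f_{1:n}(x) = n(1-F(x))^{n-1} f(x)$, and then the hazard rate of the first order statistic simplifies to
\[
h_{1:n}(x) = \frac{f_{1:n}(x)}{1-F_{1:n}(x)} = \frac{n(1-F(x))^{n-1} f(x)}{(1-F(x))^n} = n\cdot h(x),
\]
where $h$ is the hazard rate of $T$.

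Since $T$ is MHR, $h$ is weakly increasing, and hence the positive scalar multiple $n\cdot h$ is weakly increasing as well. This is exactly the MHR property for $T[1:n]$, completing the argument.

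The main (minor) care is to ensure the hazard rate is well defined, i.e.\ that we are only dividing by $1-F(x)$ on the support where $F(x)<1$; outside this range $T[1:n]$ inherits its support from $T$ and the statement is vacuous. No real obstacle arises, as the key structural fact is simply that taking the minimum of $n$ independent copies multiplies the hazard rate by the constant factor $n$, which preserves monotonicity.
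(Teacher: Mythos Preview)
Your proof is correct and follows essentially the same approach as the paper: both compute the cdf and pdf of $T[1:n]$, simplify the hazard rate to $n\cdot h(x)$, and conclude monotonicity is inherited from $h$.
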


\begin{lemma}
\label{lemma:MHR_squares}
For any \emph{continuous} MHR random variable $X$ and any positive integer $r$, $\expect[X^r]\leq r!\expect[X]^r$.
\end{lemma}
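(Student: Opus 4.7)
The plan is to compare $X$ to the exponential $Y$ with the same mean $\mu=\E[X]$ via a convex-order argument: since $x\mapsto x^r$ is convex on $[0,\infty)$, it suffices to show $\E[X^r]\leq \E[Y^r]=r!\,\mu^r$.

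Let $\bar F(x)=\probability{X>x}$ and $\bar G(x)=e^{-x/\mu}$. The MHR hypothesis makes $\log \bar F(x)=-\int_0^x h(u)\ud u$ concave, while $\log \bar G$ is linear, so $\psi(x):=\log\bar F(x)+x/\mu$ is concave on $[0,\infty)$ with $\psi(0)=0$ and $\psi(x)\to -\infty$. A concave function with these properties is either non-positive throughout, or else strictly positive on some initial interval $(0,x^*)$ and strictly negative on $(x^*,\infty)$. The same dichotomy transfers to $\bar F-\bar G$, which shares the sign of $\psi$. Establishing this \emph{single-crossing} structure is the heart of the argument and the only place the MHR assumption is used essentially.

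Combined with the first-moment equality $\int_0^\infty (\bar F-\bar G)\ud x = \E[X]-\E[Y]=0$, the first case of the dichotomy forces $\bar F\equiv \bar G$ and the bound is immediate. Otherwise $\bar F-\bar G\geq 0$ on $[0,x^*]$ and $\leq 0$ on $[x^*,\infty)$. Using the identity $\E[X^r]=\int_0^\infty rx^{r-1}\bar F(x)\ud x$ (valid for $X\geq 0$) and the monotonicity of $x^{r-1}$, on $[0,x^*]$ we have $x^{r-1}\leq (x^*)^{r-1}$ with $\bar F-\bar G\geq 0$, and on $[x^*,\infty)$ we have $x^{r-1}\geq (x^*)^{r-1}$ with $\bar F-\bar G\leq 0$, so in both regions $x^{r-1}(\bar F-\bar G)\leq (x^*)^{r-1}(\bar F-\bar G)$; integrating,
\[
\E[X^r]-\E[Y^r] \;=\; \int_0^\infty r x^{r-1}(\bar F-\bar G)\ud x \;\leq\; r(x^*)^{r-1}\int_0^\infty (\bar F-\bar G)\ud x \;=\; 0.
\]
Since $\E[Y^r]=r!\,\mu^r$, the lemma follows.
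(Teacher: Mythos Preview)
Your proof is correct and takes a genuinely different route from the paper's. The paper simply invokes a moment inequality of Barlow and Proschan for IFR distributions---namely that the normalized moments $\lambda_s=\expect[X^s]/s!$ satisfy $(\lambda_{i+t}/\lambda_i)^s\le(\lambda_{i+s}/\lambda_i)^t$---and specializes it to $i=0$, $s=1$, $t=r$. Your argument instead establishes, from scratch, that an MHR variable $X$ is dominated in the convex order by the exponential $Y$ with the same mean: the log-concavity of $\bar F$ forces a single crossing of $\bar F$ and $\bar G=e^{-x/\mu}$, and together with $\int(\bar F-\bar G)=0$ this yields $\int \phi'(\bar F-\bar G)\leq 0$ for any nondecreasing $\phi'$, hence $\expect[\phi(X)]\le\expect[\phi(Y)]$ for every convex $\phi$ on $[0,\infty)$.

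What each approach buys: the paper's proof is a one-line citation, but opaque unless one looks up the reference. Your argument is self-contained, isolates exactly where MHR enters (concavity of $\log\bar F$), and actually proves the stronger statement $X\le_{\mathrm{cx}} Y$, from which the moment bound is just the special case $\phi(x)=x^r$. One minor point worth tightening: the assertion that $\psi(x)\to-\infty$ is true but deserves a sentence of justification (bounded support gives it immediately; for unbounded support, monotonicity of $h$ together with the mean constraint forces $h(x)>1/\mu$ eventually). Alternatively, you can drop that claim entirely, since concavity, $\psi(0)=0$, and the impossibility of $\psi>0$ throughout (which would contradict $\int\bar F=\int\bar G$) already deliver the single-crossing dichotomy you need.
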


We now introduce the notion of $k$-stretched distributions. The property that identifies these distributions plays an important role in the approximation ratio of VCG as we will see later in the analysis (Theorem~\ref{lem:technical_distr_ass}).
\begin{definition}\label{def:stretched}
Given a function $k$ over integers, we call a distribution \emph{$k$-stretched} if its order statistics satisfy
$$\expect[T[1:n][n:n]]\geq k(n)\cdot  \expect[T[1:n]],$$
for all positive integers $n$.
\end{definition}

We will use the following result by Aven to bound the expected makespan of VCG.
\begin{theorem}[\citet{Aven:1985kx}]
\label{th:bounds_max_mean_variance}
If $X_1,X_2,\dots,X_n$ are (not necessarily independent) random variables with mean $\mu$ and variance $\sigma^2$, then
$$
\expect[\max_i X_i]\leq \mu+\sqrt{n-1}\sigma.
$$
\end{theorem}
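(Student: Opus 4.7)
The plan is to reduce the problem to bounding the maximum deviation from the empirical mean of centered variables, and then apply Cauchy--Schwarz pointwise. First, I would center by setting $Z_i = X_i - \mu$, so that $\expect[Z_i] = 0$ and $\expect[Z_i^2] = \sigma^2$. Writing $\bar Z = \frac{1}{n}\sum_i Z_i$, I decompose
$$
\max_i Z_i = \max_i (Z_i - \bar Z) + \bar Z,
$$
and since $\expect[\bar Z]=0$, this gives $\expect[\max_i X_i] - \mu = \expect[\max_i (Z_i - \bar Z)]$. So it suffices to prove $\expect[\max_i(Z_i - \bar Z)] \leq \sqrt{n-1}\,\sigma$.

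Next, I would establish a \emph{deterministic} inequality. Set $a_i := Z_i - \bar Z$, so $\sum_i a_i = 0$, and let $a_\star = \max_i a_i$ with index $i^\star$. From $\sum_i a_i = 0$ we get $a_\star = -\sum_{i\neq i^\star} a_i$, and Cauchy--Schwarz yields
$$
a_\star^2 \;=\; \Bigl(\sum_{i\neq i^\star} a_i\Bigr)^{\!2} \;\leq\; (n-1)\sum_{i\neq i^\star} a_i^2 \;=\; (n-1)\Bigl(\sum_i a_i^2 - a_\star^2\Bigr),
$$
which rearranges to the pointwise bound $a_\star \leq \sqrt{\tfrac{n-1}{n}}\sqrt{\sum_i a_i^2}$.

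Now I would take expectations and apply Jensen's inequality to the concave square root:
$$
\expect[\max_i (Z_i - \bar Z)] \;\leq\; \sqrt{\tfrac{n-1}{n}}\,\expect\!\left[\sqrt{\textstyle\sum_i (Z_i - \bar Z)^2}\right] \;\leq\; \sqrt{\tfrac{n-1}{n}}\,\sqrt{\expect\!\left[\textstyle\sum_i (Z_i-\bar Z)^2\right]}.
$$
The last expectation is easy to control without any independence assumption: expanding the variance-around-the-mean identity gives $\sum_i (Z_i - \bar Z)^2 = \sum_i Z_i^2 - n\bar Z^2 \leq \sum_i Z_i^2$, so taking expectation yields the uniform bound $\expect[\sum_i (Z_i-\bar Z)^2] \leq n\sigma^2$. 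Plugging this in gives $\expect[\max_i(Z_i - \bar Z)] \leq \sqrt{\tfrac{n-1}{n}}\cdot\sqrt{n\sigma^2} = \sqrt{n-1}\,\sigma$, completing the proof.

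The main subtlety — and the step where I had to be careful because the $X_i$ are \emph{not} assumed independent — is the bound on $\expect[\sum_i (Z_i - \bar Z)^2]$. A naive computation through $\mathrm{Var}(\bar Z)$ would require knowledge of the covariances, but the trick is to use the trivial pointwise inequality $\sum_i (Z_i - \bar Z)^2 \leq \sum_i Z_i^2$ (equivalently, discarding the nonnegative $n\bar Z^2$ term), which sidesteps any dependence structure. This is what allows Aven's inequality to hold in full generality for arbitrarily correlated variables with a common mean and variance.
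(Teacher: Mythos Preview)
Your proof is correct. The paper does not actually prove this statement; it is quoted as an external result from \citet{Aven:1985kx} and used as a black box in the proof of Theorem~\ref{th:VCG_upper_iid_const}, so there is no ``paper's own proof'' to compare against. Your argument --- centering, shifting by the empirical mean $\bar Z$ so that the deviations $a_i=Z_i-\bar Z$ sum to zero, applying Cauchy--Schwarz to extract the factor $\sqrt{(n-1)/n}$, and then dropping the nonnegative $n\bar Z^2$ term to avoid any dependence on covariances --- is a clean and standard route to Aven's bound, and every step checks out. The only minor point you might make explicit is that $a_\star\geq 0$ (which follows immediately from $\sum_i a_i=0$), so that taking the square root in the rearranged inequality $na_\star^2\leq (n-1)\sum_i a_i^2$ is justified.
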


Finally, we use the notation introduced in the probability preliminaries to present some known bounds on the expected optimal makespan. So, if given a matrix of processing times $\vecc t$ we denote its optimal makespan by $\opt(\vecc t)$, we wish to bound $\E_{\vecc t}\left[\opt(\vecc t)\right]$ (we omit dependence on $\vecc t$ for clarity of presentation). Part of the notorious difficulty of the scheduling problem stems exactly from the lack of general, closed-form formulas for the optimal makespan. However, the following two easy lower bounds are widely used (see e.g. \cite{Chawla:2013rm}):

\begin{observation}
\label{lemma:trivial_bounds_opt}
If the processing times are drawn from machine-identical distributions, then the
expected optimal makespan is bounded by
$$
\E[\opt]\geq \max\sset{\expect\left[\max_j T_j[1:n]\right],\frac{1}{n}\sum_{j=1}^m\expect\left[T_j[1:n]\right]},
$$
where $T_j$ follows the distribution corresponding to task $j$.
\end{observation}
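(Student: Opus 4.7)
The plan is to establish both lower bounds pointwise for any realization $\vecc t$ of the processing-time matrix, and then take expectations. Fix $\vecc t$ and let $\alpha$ be any feasible allocation (in particular, the optimal one). For the first bound, observe that each task $j$ must be executed by some machine $i^*(j)$, and the load of that machine is at least $t_{i^*(j),j}\geq \min_i t_{ij}$. Hence the makespan $\max_i \sum_j \alpha_{ij} t_{ij}$ is at least $\min_i t_{ij}$ for every task $j$, and therefore at least $\max_j \min_i t_{ij}$. Taking expectations and using the fact that, for machine-identical distributions, $\min_i t_{ij}$ has the same distribution as $T_j[1:n]$, yields $\E[\opt]\geq \expect[\max_j T_j[1:n]]$.

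For the second bound, I would use the standard averaging argument. Since the maximum over $n$ nonnegative quantities is at least their average,
\[
\opt(\vecc t)=\max_i \sum_{j} \alpha_{ij} t_{ij}\;\geq\;\frac{1}{n}\sum_i \sum_{j} \alpha_{ij} t_{ij}\;=\;\frac{1}{n}\sum_{j} t_{\sigma(j),j},
\]
where $\sigma(j)$ denotes the machine to which the optimal allocation assigns task $j$. Dropping the optimality of $\sigma$ and lower-bounding each term by $\min_i t_{ij}$ gives $\opt(\vecc t)\geq \frac{1}{n}\sum_j \min_i t_{ij}$. Taking expectations, interchanging sum and expectation by linearity, and again invoking $\min_i t_{ij}\stackrel{d}{=}T_j[1:n]$, yields $\E[\opt]\geq \frac{1}{n}\sum_{j=1}^m \expect[T_j[1:n]]$.

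Combining the two pointwise inequalities via $\max$ and taking expectations gives the claimed bound. There is essentially no technical obstacle here: the only thing to be mindful of is that one should first derive the deterministic inequalities on $\opt(\vecc t)$ before passing to expectations, since $\E[\max\{A,B\}]\geq \max\{\E[A],\E[B]\}$ goes in the right direction for lower bounds, and linearity of expectation handles the averaging cleanly.
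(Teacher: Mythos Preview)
Your proof is correct and is precisely the standard argument. Note, however, that the paper does not actually supply a proof of this observation: it is stated as a well-known fact with a reference to \citet{Chawla:2013rm}, so there is nothing in the paper to compare your argument against beyond confirming that your reasoning is the intended one.
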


\section{Upper Bounds}
\label{sec:upper-bounds}
In this section we provide results on the performance of the VCG mechanism for the Bayesian scheduling problem for different assumptions on the number of tasks (compared to the machines), and different distributional assumptions on their processing times. Our first result shows that VCG is $O\left(\frac{\ln n}{\ln \ln n}\right)$--approximate in the general case, without assuming identical tasks or even MHR distributions. We then consider some additional assumptions under which VCG achieves a constant approximation of the expected optimal makespan. In what follows, an allocation where all machines have the same workload will be called \emph{fully balanced}.

\begin{theorem}\label{thm:ub_identical_machines}
VCG is $O\left(\frac{\ln n}{\ln \ln n}\right)$-approximate for the Bayesian scheduling problem with $n$ identical machines.
\end{theorem}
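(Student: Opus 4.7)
The plan is to reduce the VCG allocation to a weighted balls-into-bins process, use Theorem~\ref{th:max_load_majorization} to dominate that process by one with \emph{equal} weights, and finally combine the classical occupancy bound of \citet{Raab:1998ab} with the two lower bounds in Observation~\ref{lemma:trivial_bounds_opt}. Since the distributions are machine-identical, for each task $j$ the values $t_{1j},\ldots,t_{nj}$ are exchangeable, so the winning machine is uniformly distributed over $[n]$ and is independent of the minimum value $W_j := T_j[1:n]$; distinct tasks are mutually independent. Consequently the joint distribution of the VCG workloads $(Y_i)_{i=1}^n$ coincides with the one obtained by first sampling each $W_j$ independently and then placing a ball of weight $W_j$ uniformly at random into one of $n$ bins.

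Next I would condition on the realization $\vecc W=(W_1,\ldots,W_m)$. Set $S=\sum_j W_j$, $W_{\max}=\max_j W_j$, and $k=\lceil S/W_{\max}\rceil\leq m$, and let $M(\vecc w)$ denote the expected maximum bin load when balls of weights $\vecc w$ are thrown uniformly into the $n$ bins. The vector $\vecc W^{\!*}\in\R^{k}$ consisting of $k-1$ copies of $W_{\max}$ and a final entry $S-(k-1)W_{\max}\in[0,W_{\max}]$ has total weight $S$, and since every coordinate of $\vecc W$ lies in $[0,W_{\max}]$ a direct inspection of partial sums shows that $\vecc W^{\!*}$ majorizes $\vecc W$ in the sense of Theorem~\ref{th:max_load_majorization}. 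That theorem therefore yields $M(\vecc W)\leq M(\vecc W^{\!*})$, and by the obvious monotonicity of the max load under increasing any single ball's weight, $M(\vecc W^{\!*})\leq W_{\max}\cdot B(k,n)$, where $B(k,n)$ denotes the expected maximum occupancy of $k$ \emph{unit} balls tossed uniformly into $n$ bins.

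The standard balls-into-bins bound of \citet{Raab:1998ab} gives $B(k,n)=O\!\left(k/n+\ln n/\ln\ln n\right)$ uniformly in $k$. Plugging $k\leq S/W_{\max}+1$ into $M(\vecc W)\leq W_{\max}B(k,n)$ and then unconditioning on $\vecc W$ yields, by linearity of expectation,
\begin{equation*}
\expect\bigl[\algoname{VCG}(\vecc t)\bigr]\;\leq\; O\!\left(\frac{\expect[S]}{n}\right)+O\!\left(\frac{\ln n}{\ln\ln n}\right)\expect[W_{\max}].
\end{equation*}
Since $\expect[S]/n=\tfrac{1}{n}\sum_j\expect[T_j[1:n]]$ and $\expect[W_{\max}]=\expect\!\left[\max_j T_j[1:n]\right]$, Observation~\ref{lemma:trivial_bounds_opt} bounds each summand on the right-hand side by $O(\ln n/\ln\ln n)\cdot\expect[\opt]$, giving the claim.

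The only delicate step I foresee is the bookkeeping around Theorem~\ref{th:max_load_majorization}, which is stated for vectors of possibly unequal length but \emph{equal} total sums: this is precisely why $\vecc W^{\!*}$ must carry the exact residual $S-(k-1)W_{\max}$, and only afterwards may one invoke monotonicity to round that residual up to a full copy of $W_{\max}$. The virtue of this route is that it is entirely distribution-free in $\vecc W$, so it delivers the bound without any MHR or i.i.d.\ assumption, in line with the remark in Section~\ref{sec:preliminaries} that the unit-weight, balls-into-bins instance should be the worst case for VCG.
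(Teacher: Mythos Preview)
Your strategy---reduce VCG on identical machines to a weighted balls-in-bins process and then majorize by equal-weight balls---is the same idea the paper uses, but organized differently. The paper first passes (via Lemma~\ref{lem:VCG_reduction_worst-case}) to a single deterministic worst-case instance in which the optimal schedule is \emph{fully balanced}; after scaling so that $\opt=1$, the weights sum to exactly $n$ with each at most $1$, so $\vecc 1_n$ majorizes them and only the classical bound $B(n,n)=O(\ln n/\ln\ln n)$ is needed. You instead condition on the random weights $\vecc W$, majorize by $k=\lceil S/W_{\max}\rceil$ copies of $W_{\max}$, and then require an occupancy bound for general $k$. Both routes are valid in principle; the paper's avoids ever needing $k\neq n$.

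There is, however, a concrete error in your occupancy step: the claim $B(k,n)=O\!\left(k/n+\ln n/\ln\ln n\right)$ uniformly in $k$ is false. Take $k=\Theta(n\ln n/\ln\ln n)$; then $k/n=\Theta(\ln n/\ln\ln n)$, but the expected maximum load is $\Theta(\ln n/\ln\ln\ln n)$, which is asymptotically larger---this is precisely the Raab--Steger regime where the max load behaves like $\ln n/\ln\!\bigl(\tfrac{n\ln n}{k}\bigr)$. The fix is painless: use instead the cruder but correct bound
\[
B(k,n)\;\le\;\bigl(1+\lceil k/n\rceil\bigr)\,B(n,n)\;=\;O\!\left(\Bigl(1+\tfrac{k}{n}\Bigr)\frac{\ln n}{\ln\ln n}\right),
\]
which follows immediately from subadditivity of the maximum load under splitting the $k$ balls into $\lceil k/n\rceil$ batches of at most $n$ each. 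Plugging this in (with $k\le S/W_{\max}+1$) gives $M(\vecc W)\le O(\ln n/\ln\ln n)\cdot\bigl(W_{\max}+S/n\bigr)$; after unconditioning, both $\expect[W_{\max}]=\expect[\max_j T_j[1:n]]$ and $\expect[S]/n=\tfrac{1}{n}\sum_j\expect[T_j[1:n]]$ are at most $\expect[\opt]$ by Observation~\ref{lemma:trivial_bounds_opt}, and the $O(\ln n/\ln\ln n)$ ratio follows.
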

As we will see later in Theorem~\ref{thm:lb-vcg-identical-machines}, this result is in general tight. In order to prove Theorem~\ref{thm:ub_identical_machines} we will make use of the following lemma:
\begin{lemma}
\label{lem:VCG_reduction_worst-case}
If \algoname{VCG} is $\rho$-approximate for the prior free scheduling problem with identical machines on inputs for which the optimal allocation is fully balanced, then \algoname{VCG} is $\rho$-approximate for the Bayesian scheduling problem where the machines are a priori identical.
\end{lemma}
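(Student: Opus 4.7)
The plan is to argue in two stages. First, I will upgrade the hypothesis to a pointwise prior-free bound: \algoname{VCG} is $\rho$-approximate on \emph{every} identical-machine prior-free instance, not just the fully balanced ones. Second, by conditioning on the realized processing times and averaging, this pointwise bound will lift to the Bayesian setting.

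For the first stage, fix an arbitrary identical-machine prior-free instance with weights $(w_1,\dots,w_m)$ and optimal makespan $C$. Two structural inequalities follow immediately from the definition of $\opt$: $w_j\le C$ for every $j$, because no task's weight can exceed the optimum makespan on its assigned machine, and $\sum_j w_j\le nC$, by pigeonhole across the $n$ machines. I then pad the instance with $n$ extra tasks, each of weight $C-\tfrac{1}{n}\sum_j w_j\in[0,C]$, producing an augmented weight vector $w'$ of total mass exactly $nC$ and with every entry at most $C$. Compare $w'$ with the fully balanced instance $\mathcal{I}^{\star}=(C,C,\dots,C)$ of length $n$: the two sums coincide, and the partial sums of $\mathcal{I}^{\star}$ dominate those of $w'$ entry by entry (since each $w'_j\le C$), so Theorem~\ref{th:max_load_majorization} gives that the expected balls-in-bins max load under $\mathcal{I}^{\star}$ is at least that under $w'$. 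A straightforward coupling (allocate the original $m$ balls first, then the $n$ extras) shows that padding only inflates the expected max load, so $W(w)\le W(w')$; and the hypothesis, applied to the fully balanced $\mathcal{I}^{\star}$ with $\opt=C$, gives $W(\mathcal{I}^{\star})\le\rho C$. Chaining these inequalities yields $W(w)\le\rho\cdot\opt(w)$ for every identical-machine prior-free instance.

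For the second stage, the key observation is that in the machine-identical Bayesian setting the machine $a_j$ chosen by \algoname{VCG} for each task $j$ is --- by the symmetry of the distributions across machines --- uniformly distributed on $\{1,\dots,n\}$ and independent of the realized minimum $Z_j=T_j[1:n]$. Conditionally on $Z=(Z_1,\dots,Z_m)$, the \algoname{VCG} allocation therefore coincides with a uniform random balls-in-bins process on weights $Z$, so $\E[\algoname{VCG}]=\E_Z[W(Z)]$. The first-stage bound, applied pointwise to each realization $z$ of $Z$, gives $W(z)\le\rho\cdot\opt_{\mathrm{pf}}(z)$, where $\opt_{\mathrm{pf}}(z)$ denotes the prior-free optimum of the induced deterministic instance. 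Since for every realization $\vecc t$ the Bayesian optimum must pay at least $Z_j$ for task $j$, we also have $\opt_{\mathrm{pf}}(Z(\vecc t))\le\opt(\vecc t)$ pointwise, and taking expectations yields $\E[\algoname{VCG}]\le\rho\cdot\E[\opt]$.

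The main obstacle is orienting Theorem~\ref{th:max_load_majorization} correctly, since it says that \emph{more concentrated} weight vectors produce larger expected max loads --- the opposite of what one naively wants for an upper bound. The trick is to select $\mathcal{I}^{\star}$, whose weights are as large as possible (each equal to $C$) while still admitting the trivial singleton partition, as the majorant, and to pad the original instance up to total mass $nC$ so that the equal-sum hypothesis of the theorem is satisfied.
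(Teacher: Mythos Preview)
Your proof is correct, but it follows a genuinely different route from the paper's.

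The paper argues by contraposition and an existential conditioning: starting from a Bayesian instance with ratio exactly $\rho$, it conditions on the vector of column minima $(t_1^*,\dots,t_m^*)$ and picks a realization on which the conditional ratio is at least $\rho$; on that realization \algoname{VCG} coincides with the identical-machine prior-free instance $\vecc t^*$, while $\opt$ can only improve when all entries in a column are lowered to $t_j^*$. Finally, it forces the optimal allocation to be fully balanced by a monotonicity argument: for any machine whose optimal load is strictly below the makespan, increase one of its task weights toward the makespan; this leaves $\opt$ unchanged and can only increase \algoname{VCG}. In particular, the paper's proof of the lemma does \emph{not} invoke Theorem~\ref{th:max_load_majorization}; the majorization theorem is used only later, in the proof of Theorem~\ref{thm:ub_identical_machines}.

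Your argument instead front-loads Theorem~\ref{th:max_load_majorization}: Stage~1 pads an arbitrary prior-free instance up to total mass $nC$ and majorizes it by the canonical fully-balanced instance $(C,\dots,C)$, thereby upgrading the hypothesis from ``fully balanced'' to ``all prior-free instances''; Stage~2 then averages over the realized minima $Z$. Both approaches are valid. The paper's version keeps the lemma self-contained and modular (the reduction stands independently of any balls-in-bins machinery), whereas yours effectively merges the lemma with the majorization step of Theorem~\ref{thm:ub_identical_machines} and, as a by-product, proves the slightly stronger intermediate statement that \algoname{VCG} is $\rho$-approximate on \emph{every} identical-machine prior-free instance.
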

\begin{proof}
We will show that for any instance of Bayesian scheduling with a priori identical machines, there exists a prior free scheduling instance with identical machines for which the approximation ratio of VCG is at least the same. In fact, there exists such a prior free instance, for which the optimal allocation is fully balanced.

Consider a Bayesian scheduling instance where $t_{ij}\sim \mathcal{D}_j$ for tasks $j\in[m]$ and machines $i\in [n]$. Let $\rho\geq 1$ be such that $\expect_{\vecc t}[\algoname{VCG}(\vecc t)]=\rho\cdot \E_{\vecc t}[\opt(\vecc t)]$. Then, conditioning on the minimum processing times of the tasks, there exists an $m$-dimensional vector $(t^*_1,\dots,t^*_m)$ such that
$$\expect_{\vecc t}\nolimits\left[\algoname{VCG}(\vecc t)\fwh{\min_i t_{i1}=t^*_1\land\dots\land \min_i t_{im}=t^*_m}\right]\geq\rho\cdot \expect_{\vecc t}\nolimits\left[\opt(\vecc t)\fwh{\min_i t_{i1}=t^*_1\land\dots\land \min_i t_{im}=t^*_m}\right].$$
Notice that, once such a minimum processing time $t_j^*$ has been fixed for
all tasks, the only randomization remaining within the expected makespan of
$\algoname{VCG}$ is the one with respect to the identities of the machines having processing time $t^*_j$ and the possible internal tie breaking; thus, if
we let $\vecc t^*$ denote the time matrix where task $j$ has processing time
$t_{ij}=t_j^*$ for all machines $i$, it holds that  $$\expect_{\vecc
t}\nolimits\left[\algoname{VCG}(\vecc t)\fwh{\min_i t_{i1}=t^*_1\land\dots\land \min_i t_{im}=t^*_m}\right]=\algoname{VCG}(\vecc t^*).$$
Also, once we have fixed the smallest element in every column of an input matrix $\vecc t$ (a column contains the processing times of a single task on all the machines), reducing all other values of a column $j$ to be equal to that minimum value $t_j^*$ can only improve the optimal makespan, so
$$\expect_{\vecc t}\nolimits\left[\opt(\vecc t)\fwh{\min_i t_{i1}=t^*_1\land\dots\land \min_i t_{im}=t^*_m}\right]\geq \opt(\vecc t^*).$$
Combining the above, we get that indeed $\algoname{VCG}(\vecc t^*)\geq\rho\cdot \opt(\vecc t^*)$.

It remains to be shown that, without loss, $\vecc t^*$ gives rise to an optimal (prior free) allocation that is fully balanced, that is all machines have exactly the same workload (equal to the optimal makespan).  Indeed, if that is not the case, then for any machine whose workload is strictly below the optimal makespan, we can slightly increase the processing time $t^*_j$ of one of its tasks $j$ without affecting the optimal makespan, while at the same time that increase can only make the performance of $\algoname{VCG}$ worse.
\end{proof}

We are now ready to prove Theorem~\ref{thm:ub_identical_machines}. Lemma~\ref{lem:VCG_reduction_worst-case} essentially reduces the analysis of \algoname{VCG} for the Bayesian scheduling problem for identical machines to that of a simple weighted balls-in-bins setting:

\begin{proof}[Proof of Theorem~\ref{thm:ub_identical_machines}]
From Lemma~\ref{lem:VCG_reduction_worst-case}, it is enough to analyze the performance of VCG on input matrices where the processing time of each task is the same across all machines and the optimal schedule is fully balanced. Without loss (by scaling) it can be further assumed that the optimal makespan is exactly $1$. Then, since VCG is breaking ties uniformly at random, the problem is reduced to analyzing the expected maximum (weighted) load when throwing $m$ balls with weights $(w_1,\dots,w_m)=\vecc w$ (uniformly at random) into $n$ bins, when $\sum_{j=1}^m w_j=n$.
Then, by Theorem~\ref{th:max_load_majorization}, that maximum load is upper bounded by the expected maximum load of throwing $n$ (unit weight) balls into $n$ bins, because the $n$-dimensional unit vector $\vecc 1_n$  majorizes $\vecc w$: $\vecc 1_n$'s components sum up to $n$ and also $w_j\leq 1$ for all $j\in [n]$ (due to the assumption that the optimal makespan is $1$).
By classic balls-in-bins results (see e.g.~\cite{Motwani:1995aa,Raab:1998ab}), the expected maximum load of any machine is upper bounded by $\varTheta\left(\frac{\ln n}{\ln \ln n}\right)$.
\end{proof}

We now focus on the special but important case where both tasks and machines are a priori identical:
\begin{theorem}\label{thm:bound4}
VCG is $2\left(1+\frac{n\ln n}{m}\right)$-approximate for the Bayesian scheduling problem with i.i.d.\ processing times drawn from a continuous MHR distribution.
\end{theorem}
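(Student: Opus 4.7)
The plan is to bound $\E[\max_i Y_i]$ via a moment-generating-function (MGF) argument that leverages the MHR assumption through Lemmas~\ref{lemma:first_order_statistic_MHR} and~\ref{lemma:MHR_squares}. For any fixed machine $i$, the summands $Y_{i,1},\dots,Y_{i,m}$ of $Y_i$ are mutually independent, since they depend on disjoint columns of the processing-time matrix $\vecc t$. By symmetry of the i.i.d.\ model, each $Y_{i,j}$ equals $0$ with probability $(n-1)/n$ and is distributed as $T[1:n]$ with probability $1/n$; and by Lemma~\ref{lemma:first_order_statistic_MHR}, $T[1:n]$ is itself a continuous MHR random variable. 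Setting $\nu := \E[T[1:n]]$, Lemma~\ref{lemma:MHR_squares} gives $\E[T[1:n]^r] \leq r!\,\nu^r$, so a term-by-term Taylor expansion of the exponential yields $\E[e^{sT[1:n]}] \leq \sum_{r\geq 0}(s\nu)^r = 1/(1-s\nu)$ for $s\in(0,1/\nu)$. Combining this with the two-point form of $Y_{i,j}$, taking the product over $j$ by independence, and using $1+x\leq e^x$, delivers the uniform MGF bound
\[
\E\!\left[e^{sY_i}\right] \;\leq\; \left(1 + \frac{s\nu/n}{1-s\nu}\right)^{\!m} \;\leq\; \exp\!\left(\frac{\mu s}{1-s\nu}\right), \qquad \mu := \frac{m\nu}{n}.
\]

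Next I would convert the MGF bound into a bound on the expected maximum via the standard exponential inequality $\E[\max_i Y_i] \leq \tfrac{1}{s}\ln\sum_{i}\E[e^{sY_i}]$ (a consequence of $\max_i x_i \leq \tfrac{1}{s}\ln \sum_i e^{sx_i}$ combined with Jensen for the concave $\ln$), which together with the MGF bound above gives
\[
\E[\max_i Y_i] \;\leq\; \frac{\ln n}{s} + \frac{\mu}{1-s\nu}, \qquad 0<s<1/\nu.
\]
Writing $u := s\nu\in(0,1)$, the right-hand side becomes $\nu\ln n/u + \mu/(1-u)$, a one-variable convex optimization whose minimum over $u$ evaluates in closed form to $(\sqrt{\mu}+\sqrt{\nu\ln n})^{2} \leq 2\mu + 2\nu\ln n$ (the last step being AM-GM on the cross term $2\sqrt{\mu\nu\ln n}$). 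Invoking then the lower bound $\E[\opt]\geq \mu$ from Observation~\ref{lemma:trivial_bounds_opt} (which in the i.i.d.\ case reads $\E[\opt]\geq (m/n)\E[T[1:n]]=\mu$), and using the identity $\nu\ln n/\mu = n\ln n/m$, we conclude
\[
\frac{\E[\algoname{VCG}(\vecc t)]}{\E[\opt(\vecc t)]} \;\leq\; \frac{2\mu + 2\nu\ln n}{\mu} \;=\; 2\!\left(1+\frac{n\ln n}{m}\right),
\]
which is the claimed ratio.

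The main obstacle is securing a sharp enough bound on the individual MGF $\E[e^{sY_{i,j}}]$. A purely second-moment-based approach---for instance, plugging the variance bound $\variance(Y_i)\leq 2m\nu^2/n$ (which does follow from Lemma~\ref{lemma:MHR_squares}) into Aven's inequality (Theorem~\ref{th:bounds_max_mean_variance})---yields only an $O\!\left(\nu\sqrt{m(n-1)/n}\right)$ additive deviation and hence a constant approximation only in the regime $m=\Omega(n^{2})$. The geometric-series bound $\E[e^{sT[1:n]}]\leq 1/(1-s\nu)$, enabled by the interplay of Lemmas~\ref{lemma:first_order_statistic_MHR} and~\ref{lemma:MHR_squares}, is precisely what pushes the constant-ratio threshold all the way down to $m=\Omega(n\ln n)$ and what produces the exact constants in the statement through the closed-form minimum $(\sqrt{\mu}+\sqrt{\nu\ln n})^2$.
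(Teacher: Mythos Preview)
Your proof is correct and follows essentially the same route as the paper: an MGF/Jensen argument combined with the MHR moment control of Lemmas~\ref{lemma:first_order_statistic_MHR} and~\ref{lemma:MHR_squares}, followed by the lower bound $\E[\opt]\geq \frac{m}{n}\E[T[1:n]]$ from Observation~\ref{lemma:trivial_bounds_opt}. The only cosmetic difference is that the paper fixes $s=\frac{1}{2\E[T[1:n]]}$ at the outset (so the geometric tail sums to $1/n$ directly), whereas you keep $s$ free, optimize to obtain the tighter intermediate bound $(\sqrt{\mu}+\sqrt{\nu\ln n})^2$, and then relax it via AM--GM to exactly the paper's $2(\mu+\nu\ln n)$.
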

\begin{proof}
Let $T$ be a random variable following the distribution from which the execution
times $t_{ij}$ are drawn. Following the notation introduced in the introduction, the workload of a machine $i$ is given by the random variable $Y_i=\sum_{j=1}^m Y_{i,j}$. Then, for the expected makespan $\expect[\max_i Y_i]$ and any real $s>0$ it holds that
\begin{align}\label{eq:exp_max}
e^{s\cdot\expect[\max_i Y_i]}&\leq \expect[e^{s \max_i Y_i}] = \expect[\max_i e^{s Y_i}] \leq \sum_{i=1}^n \expect[e^{s Y_i}]=\sum_{i=1}^n \prod_{j=1}^m \expect[ e^{s Y_{i,j}}]= n \expect[ e^{s Y_{1,1}}]^m,
\end{align}
where we have used Jensen's inequality based on the convexity of the exponential
function, and the fact that for a fixed machine $i$ the random variables $Y_{i,j}$, $j=1,\dots,m$, are independent (the processing times are i.i.d. and VCG allocates each task independently of the others). We now bound the term $\expect[ e^{s Y_{1,1}}]$:
\begin{align*}
\expect[ e^{s Y_{1,1}}]&=\expect\left[\sum_{r=0}^\infty \frac{(s Y_{1,1})^r}{r!}\right]=1+\sum_{r=1}^\infty s^r\frac{ \expect[Y_{1,1}^r]}{r!}=1+\frac{1}{n}\sum_{r=1}^\infty s^r\frac{ \expect[T[1:n]^r]}{r!}\leq 1+\frac{1}{n}\sum_{r=1}^\infty s^r\expect[T[1:n]]^r,
\end{align*}
where for the last inequality we have used the fact that the first order statistic of an MHR distribution is also MHR (Lemma~\ref{lemma:first_order_statistic_MHR}) and Lemma~\ref{lemma:MHR_squares}. Then, by choosing $s=s^*\equiv\frac{1}{2\cdot\expect[T[1:n]]}$  we get that
\begin{align*}
\expect[ e^{s^* Y_{1,1}}]\leq 1+\frac{1}{n}\sum_{r=1}^\infty \frac{1}{2^r} \leq  1+\frac{1}{n},
\end{align*}
and (\ref{eq:exp_max}) yields
\begin{align}\nonumber
\expect[\max_i Y_i]&\leq \ln\left(n \expect[ e^{s^* Y_{1,1}}]^m\right)\frac{1}{s^*}\\\nonumber
&\leq 2\ln\left(n \left(1+\frac{1}{n}\right)^m\right)\expect[T[1:n]]\\\nonumber
&\leq 2\ln\left(n e^{m/n}\right)\expect[T[1:n]]\\\label{eq:exp_makespan}
&=2\left( \ln n+\frac{m}{n}\right)\expect[T[1:n]].
\end{align}
But from Observation~\ref{lemma:trivial_bounds_opt} we know that $\E[\opt] \geq \frac{m}{n}\expect[T[1:n]]$ for the case of i.i.d.\ execution times, and the theorem follows.
\end{proof}

Notice that Theorem~\ref{thm:bound4} in particular implies that VCG achieves a \emph{small, constant} approximation ratio whenever the number of tasks is slightly more than that of machines:

\begin{corollary}
\label{cor:bound4}
VCG is $4$-approximate for the Bayesian scheduling problem with
$m\geq n \ln n$ i.i.d.\ tasks drawn from a continuous MHR distribution.
\end{corollary}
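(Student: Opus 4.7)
The plan is to derive this corollary as an immediate consequence of Theorem~\ref{thm:bound4}. Theorem~\ref{thm:bound4} already establishes that VCG is $2\!\left(1+\frac{n\ln n}{m}\right)$-approximate for i.i.d.\ continuous MHR processing times, so the only remaining task is to plug in the hypothesis $m\geq n\ln n$ and simplify.

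Concretely, I would observe that the assumption $m\geq n\ln n$ is equivalent to $\frac{n\ln n}{m}\leq 1$. Substituting this into the approximation ratio from Theorem~\ref{thm:bound4} yields
\[
2\!\left(1+\frac{n\ln n}{m}\right)\leq 2(1+1)=4,
\]
which is exactly the claimed bound. Since Theorem~\ref{thm:bound4} already handles all the probabilistic content (the exponential-moment bound via Jensen's inequality, the MHR moment bound from Lemma~\ref{lemma:MHR_squares}, the closure of MHR under the first order statistic from Lemma~\ref{lemma:first_order_statistic_MHR}, and the lower bound on $\expect[\opt]$ from Observation~\ref{lemma:trivial_bounds_opt}), nothing further is required here.

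There is essentially no obstacle: the corollary is purely a numerical specialization of the parent theorem, included to highlight that the approximation ratio becomes a small constant in the natural regime where the number of tasks is at least a $\ln n$ factor larger than the number of machines. The proof can therefore be stated in one or two lines.
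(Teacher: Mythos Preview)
Your proposal is correct and matches the paper's approach exactly: the paper simply states the corollary as an immediate consequence of Theorem~\ref{thm:bound4}, and your one-line computation $2\bigl(1+\frac{n\ln n}{m}\bigr)\leq 2(1+1)=4$ under the hypothesis $m\geq n\ln n$ is precisely the intended argument.
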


The following theorem will help us analyze the performance of VCG for the complementary case to that of Corollary~\ref{cor:bound4}, that is when the number of tasks is $ m\leq n \ln n$. Recall the notion of $k$-stretched distributions introduced in Definition~\ref{def:stretched}.
\begin{theorem}\label{lem:technical_distr_ass}
VCG is $4\frac{\ln n}{k(n)}$-approximate for the Bayesian scheduling problem with $m\leq n \ln n$ i.i.d.\ tasks drawn from a $k$-stretched MHR distribution.
\end{theorem}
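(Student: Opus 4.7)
The plan is to combine two ingredients already developed in the paper: the exponential-moment upper bound on the VCG makespan derived in the proof of Theorem~\ref{thm:bound4}, and a lower bound on $\E[\opt]$ that exploits the $k$-stretched property of the distribution through Observation~\ref{lemma:trivial_bounds_opt}.

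First I would recycle inequality~(\ref{eq:exp_makespan}) from the proof of Theorem~\ref{thm:bound4}, which gives (without any restriction on $m$)
\[
\expect[\max_i Y_i]\leq 2\left(\ln n+\frac{m}{n}\right)\expect[T[1:n]].
\]
Under the hypothesis $m\leq n\ln n$ the bracket is at most $2\ln n$, so the VCG expected makespan is bounded by $4\ln n\cdot \expect[T[1:n]]$. This handles the numerator of the approximation ratio and is where the continuity and MHR assumptions are consumed (through Lemma~\ref{lemma:first_order_statistic_MHR} and Lemma~\ref{lemma:MHR_squares} used in the derivation of (\ref{eq:exp_makespan})).

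Next I would lower bound $\E[\opt]$ using the $k$-stretched property. Observation~\ref{lemma:trivial_bounds_opt} gives $\E[\opt]\geq \expect[\max_{j\in[m]} T_j[1:n]]$, where the $T_j$ are i.i.d.\ copies of $T$. Since $m\geq n$ (our standing assumption), the maximum over $m$ task-minima dominates the maximum over any fixed $n$ of them, hence
\[
\E[\opt]\geq \expect\!\left[\max_{j\in[n]} T_j[1:n]\right]=\expect[T[1:n][n:n]]\geq k(n)\cdot \expect[T[1:n]],
\]
where the last inequality is exactly Definition~\ref{def:stretched}. Dividing the two bounds yields
\[
\frac{\expect[\algoname{VCG}(\vecc t)]}{\E[\opt]}\leq \frac{4\ln n\cdot \expect[T[1:n]]}{k(n)\cdot \expect[T[1:n]]}=\frac{4\ln n}{k(n)},
\]
which is the claimed ratio.

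There is no real obstacle here: the whole argument is essentially a two-line composition of results already in hand. The only subtle point to check is that the identification $\expect[\max_{j\in[n]} T_j[1:n]]=\expect[T[1:n][n:n]]$ is legitimate, which follows directly from the notational convention introduced in the probability preliminaries ($T[1:n][n:n]$ is defined as the maximum of $n$ independent copies of the minimum of $n$ i.i.d.\ draws from $\mathcal D$), together with the fact that the $n$ task-minima $T_1[1:n],\dots,T_n[1:n]$ are indeed i.i.d.\ under the i.i.d.\ assumption on the $t_{ij}$.
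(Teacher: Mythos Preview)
Your proposal is correct and follows essentially the same route as the paper: both start from inequality~(\ref{eq:exp_makespan}), use $m\leq n\ln n$ to cap $\ln n+\tfrac{m}{n}\leq 2\ln n$, and lower bound $\E[\opt]$ via Observation~\ref{lemma:trivial_bounds_opt} by $\expect[T[1:n][m:m]]\geq \expect[T[1:n][n:n]]\geq k(n)\,\expect[T[1:n]]$ (using $m\geq n$ and the $k$-stretched definition). The only cosmetic difference is that the paper writes the ratio first and chains the bounds, whereas you bound numerator and denominator separately before dividing.
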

\begin{proof}
From \eqref{eq:exp_makespan} we can deduce that the approximation ratio of VCG
is upper bounded by
\begin{align*}
\left( \ln n+\frac{m}{n}\right)\frac{2\expect[T[1:n]]}{\E[\opt]}
&\leq \left( \ln n+\frac{m}{n}\right)\frac{2\expect[T[1:n]]}{\expect[T[1:n][m:m]]}\\
&\leq \left( \ln n+\frac{m}{n}\right)\frac{2\expect[T[1:n]]}{\expect[T[1:n][n:n]]}\\
&\leq \left( \ln n+\frac{m}{n}\right)\frac{2}{k(n)}\\
&\leq \frac{4\ln n}{k(n)},
\end{align*}
where we have used Observation~\ref{lemma:trivial_bounds_opt} and the fact that $n\leq m\leq n\ln n$.
\end{proof}
In particular, we note that Theorem~\ref{lem:technical_distr_ass} yields a constant approximation ratio for VCG for the important special cases where the processing times are drawn independently from the uniform distribution on $[0,1]$ or any exponential distribution. Indeed, the uniform distribution on $[0,1]$ as well as any exponential distribution is $\ln$-stretched. See Appendix~\ref{append:expo_uniform_computations} for a full proof. We get the following, complementing the results in Corollary~\ref{cor:bound4}:
\begin{corollary}\label{cor:exp_uniform}
VCG is $4$-approximate for the Bayesian scheduling problem with i.i.d.\ processing times drawn from the uniform distribution on $[0,1]$ or an exponential distribution.
\end{corollary}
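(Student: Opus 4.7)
The plan is to split on whether $m \geq n\ln n$ or $m \leq n\ln n$ and invoke the two main results already established. For $m \geq n\ln n$, Corollary~\ref{cor:bound4} immediately gives the desired $4$-approximation, since both the uniform distribution on $[0,1]$ and every exponential distribution are continuous MHR distributions (the hazard rate is $\frac{1}{1-x}$ for $\mathcal U[0,1]$ and constant for exponentials). So it remains only to handle the regime $n \leq m \leq n \ln n$, for which Theorem~\ref{lem:technical_distr_ass} reduces the task to showing that both distributions are $\ln$-stretched, i.e.\ that
$$\expect[T[1:n][n:n]] \;\geq\; \ln n \cdot \expect[T[1:n]] \quad\text{for all } n \geq 1;$$
this will yield approximation ratio $\frac{4\ln n}{k(n)} = 4$.

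For any exponential distribution with rate $\lambda$, the computation is essentially closed form. First, $T[1:n]$ is itself exponentially distributed with rate $n\lambda$, so $\expect[T[1:n]] = \frac{1}{n\lambda}$. Then $T[1:n][n:n]$ is the maximum of $n$ i.i.d.\ copies of an $\mathrm{Exp}(n\lambda)$ random variable, a quantity whose expectation is known to be $\frac{H_n}{n\lambda}$, where $H_n = \sum_{k=1}^n \frac{1}{k}$ is the $n$-th harmonic number. The ratio is therefore exactly $H_n$, and the standard estimate $H_n \geq \int_1^{n+1}\frac{dx}{x} = \ln(n+1) \geq \ln n$ closes the case.

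For the uniform distribution on $[0,1]$, the numerator is $\expect[T[1:n]] = \frac{1}{n+1}$, which is immediate from $T[1:n]$ having cdf $1-(1-x)^n$ on $[0,1]$. The main obstacle is the denominator $\expect[T[1:n][n:n]]$, which has cdf $\bigl(1-(1-x)^n\bigr)^n$ and expectation
$$\expect[T[1:n][n:n]] \;=\; \int_0^1 \Bigl(1 - \bigl(1-(1-x)^n\bigr)^n\Bigr)\,dx.$$
My plan is to expand $(1-(1-x)^n)^n$ by the binomial theorem, integrate termwise, and recognize the resulting sum in terms of harmonic-like quantities; an alternative route is the substitution $u = (1-x)^n$, which recasts the integral into the form $\frac{1}{n}\int_0^1 (1-(1-u)^n)\,u^{1/n-1}\,du$ and, after one more binomial expansion, produces a clean closed form of order $\Theta\bigl(\frac{\ln n}{n}\bigr)$. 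The hard part is verifying the clean lower bound $\expect[T[1:n][n:n]] \geq \frac{\ln n}{n+1}$ uniformly in $n$; once this is in hand, dividing by $\frac{1}{n+1}$ proves that $\mathcal U[0,1]$ is $\ln$-stretched and completes the corollary.
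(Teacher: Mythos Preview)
Your proposal is correct and follows essentially the same route as the paper: split on $m\gtrless n\ln n$, invoke Corollary~\ref{cor:bound4} and Theorem~\ref{lem:technical_distr_ass} respectively, and reduce the latter to proving both distributions are $\ln$-stretched. The exponential case is handled identically. For the uniform case the only difference is cosmetic: the paper evaluates $\int_0^1\bigl(1-(1-x)^n\bigr)^m dx$ via integration by parts, obtaining a recurrence in $m$ that solves to the closed form $mB\bigl(m,1+\tfrac{1}{n}\bigr)$, and then (like you) simply asserts without further detail that the resulting stretch factor $(n+1)\bigl[1-nB(n,1+\tfrac{1}{n})\bigr]$ is at least $\ln n$.
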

We point out that the above corollary can not be generalized to hold for all MHR distributions, as the lower bound in Theorem~\ref{thm:lb-vcg-identical-machines} implies. For example, it is not very difficult to check that by taking $\varepsilon\to 0$ and considering the uniform distribution over $[1,1+\varepsilon]$, no stretch factor $k(n)=\varOmega(\ln n)$ can be guaranteed.

For our final positive result, we present an improved constant bound on the approximation ratio of VCG when we have many tasks:
\begin{theorem}\label{th:VCG_upper_iid_const}
VCG is $1+\sqrt{2}$-approximate for the Bayesian scheduling problem with $m\geq n^2$ tasks with i.i.d. processing times drawn from a continuous MHR distribution.
\end{theorem}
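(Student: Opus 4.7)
The plan is to bound $\E[\max_i Y_i]$ directly via Aven's inequality (Theorem~\ref{th:bounds_max_mean_variance}), rather than via the moment-generating function approach of Theorem~\ref{thm:bound4}. The improvement comes from the fact that the variance of the workload $Y_i = \sum_{j=1}^m Y_{i,j}$ scales linearly in $m$, so its standard deviation grows like $\sqrt{m}$, whereas the mean grows linearly in $m$; once $m$ is large enough (specifically $m \geq n^2$), the $\sqrt{n-1}\,\sigma$ term becomes comparable to, rather than dominating, the mean $\mu$.

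Concretely, I would first exploit the i.i.d.\ assumption and the symmetry of VCG to write, for each machine $i$ and task $j$, the marginal moments of $Y_{i,j} = \alpha_{ij} t_{ij}$: since $\alpha_{ij} = 1$ with probability $\tfrac{1}{n}$ and conditional on this $t_{ij} = T[1:n]$, we have $\E[Y_{i,j}^r] = \tfrac{1}{n}\E[T[1:n]^r]$ for every $r\geq 1$. In particular $\E[Y_i] = \frac{m}{n}\E[T[1:n]]$ and, because the $Y_{i,j}$ are independent across $j$ for fixed $i$,
\begin{equation*}
\variance(Y_i) = m\cdot\variance(Y_{i,1}) \leq m\cdot\E[Y_{i,1}^2] = \frac{m}{n}\E[T[1:n]^2].
\end{equation*}
Now I apply Lemma~\ref{lemma:first_order_statistic_MHR} to conclude that $T[1:n]$ is itself MHR, and Lemma~\ref{lemma:MHR_squares} with $r=2$ to get $\E[T[1:n]^2]\leq 2\E[T[1:n]]^2$. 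Hence $\variance(Y_i) \leq \tfrac{2m}{n}\E[T[1:n]]^2$, uniformly over $i$.

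Next, since by symmetry all $Y_i$ share the same mean $\mu = \tfrac{m}{n}\E[T[1:n]]$ and variance $\sigma^2 \leq \tfrac{2m}{n}\E[T[1:n]]^2$, Theorem~\ref{th:bounds_max_mean_variance} (Aven) gives
\begin{equation*}
\E[\algoname{VCG}(\vecc t)] = \E[\max_i Y_i] \leq \frac{m}{n}\E[T[1:n]] + \sqrt{(n-1)\cdot\frac{2m}{n}}\,\E[T[1:n]] = \frac{m}{n}\E[T[1:n]]\left(1+\sqrt{\frac{2n(n-1)}{m}}\right).
\end{equation*}
Using the hypothesis $m\geq n^2$, the square root is at most $\sqrt{2(n-1)/n}\leq \sqrt{2}$, so the bracketed factor is at most $1+\sqrt{2}$. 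Finally, dividing by the lower bound $\E[\opt]\geq \tfrac{m}{n}\E[T[1:n]]$ from Observation~\ref{lemma:trivial_bounds_opt} yields the claimed approximation ratio of $1+\sqrt{2}$.

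The only real obstacle is the variance bound; everything else is either standard symmetry or a direct plug-in. The reason the exponential-moment approach of Theorem~\ref{thm:bound4} cannot yield a bound better than $2(1+o(1))$ is that it effectively uses \emph{all} moments of $T[1:n]$ via the series expansion of $e^{sY_{i,1}}$, paying a factor of roughly $\ln n$ through the final logarithm. Aven's bound uses only the second moment and avoids the logarithmic loss, which is exactly what allows the improved constant $1+\sqrt{2}$ under the stronger hypothesis $m\geq n^2$.
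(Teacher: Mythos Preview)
Your proposal is correct and follows essentially the same approach as the paper: both bound $\E[\max_i Y_i]$ via Aven's inequality (Theorem~\ref{th:bounds_max_mean_variance}), use independence of the $Y_{i,j}$ across $j$ to control $\variance(Y_i)$, invoke Lemmas~\ref{lemma:first_order_statistic_MHR} and~\ref{lemma:MHR_squares} to bound the second moment of $T[1:n]$, and finish with the lower bound $\E[\opt]\geq\tfrac{m}{n}\E[T[1:n]]$ from Observation~\ref{lemma:trivial_bounds_opt}. The only differences are cosmetic---you apply the MHR second-moment bound before plugging into Aven rather than after, and you keep the factor $\sqrt{(n-1)/n}$ explicit a bit longer---but the logic and the constants match exactly.
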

\begin{proof}
We use Theorem~\ref{th:bounds_max_mean_variance} to bound the performance of VCG in this setting. In order to do so, we first bound the expectation and the variance of the makespan of a single machine. From~\eqref{eq:VCG-expect-helper1}, for the workload $Y_{i}$ of any machine $i$ we have:
\begin{equation*}
\expect[Y_i]= \sum_{j=1}^m\expect[Y_{i,j}]
			=\frac{1}{n}\sum_j\expect[T[1:n]]=\frac{m}{n}\expect[T[1:n]].
\end{equation*}
To compute the variance of the makespan of machine $i$, we note that the random variables $Y_{i,j}$ are independent with respect to $j$, for any fixed machine $i$ and thus we can get
\begin{align*}
\variance[Y_i] &=\sum_{j=1}^m\variance[Y_{i,j}] =\sum_{j=1}^m\left(\expect[Y_{i,j}^2]-\expect[Y_{i,j}]^2\right)\\
		&\leq \sum_{j=1}^m \expect[Y_{i,j}^2]=\sum_{j=1}^m \expect[\alpha_{ij}^2t_{ij}^2]=\frac{1}{n}\sum_{j=1}^m\expect[{T[1:n]}^2]\\
		&=\frac{m}{n}\expect[{T[1:n]}^2].
\end{align*}

We are now ready to use Theorem~\ref{th:bounds_max_mean_variance} and bound the expected makespan:
\begin{align*}
\expect[\max_i Y_i]&\leq \expect[Y_1]+\sqrt{n-1}\sqrt{\variance[Y_1]}\\
&\leq \frac{m}{n}\expect[T[1:n]]+\sqrt{m}\sqrt{\expect[{T[1:n]}^2]}\\
&\leq \frac{m}{n}\expect[T[1:n]]+ \sqrt{2}\sqrt{m}\expect[{T[1:n]}]\\
&\leq (1+\sqrt{2})\frac{m}{n}\expect[T[1:n]]\\
&\leq (1+\sqrt{2})\E[\opt],
\end{align*}
where the third inequality follows from Lemma~\ref{lemma:MHR_squares} (and Lemma~\ref{lemma:first_order_statistic_MHR}), for the fourth inequality we use the assumption that $m\geq n^2$ and to complete the proof, the last inequality uses a lower bound on $\E[\opt]$ from Observation~\ref{lemma:trivial_bounds_opt}.
\end{proof}

\section{Lower Bounds}
In this section we prove some lower bounds on the performance of VCG under
different distributional assumptions on the processing times. In an attempt for
a clear comparison of VCG with the mechanisms that were previously considered for
the Bayesian scheduling problem (in \cite{Chawla:2013rm}), we provide instances that lower bound
their performance as well.
\begin{theorem}\label{thm:lb-vcg-identical-machines}
For any number of tasks, there exists an instance of the Bayesian scheduling problem where VCG is not better than $\Omega\left(\frac{\ln n}{\ln \ln n}\right)$-approximate and the processing times are drawn from machine-identical continuous MHR distributions.
\end{theorem}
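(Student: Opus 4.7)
The plan is to construct a continuous, machine-identical, MHR instance that realizes (up to $o(1)$ factors) the worst case identified by the upper-bound proof of Theorem~\ref{thm:ub_identical_machines}: an instance where \algoname{VCG} degenerates into throwing $n$ unit balls into $n$ bins uniformly at random, while the optimum stays $\approx 1$. The idea is to smooth the deterministic unit point mass into a thin uniform slab (which preserves both continuity and MHR) and, if $m>n$, pad with negligible ``light'' tasks. The lower bound then follows directly from the classical balls-in-bins maximum-load bound.

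Concretely, fix small parameters $\varepsilon,\delta>0$ with $\delta\ll \varepsilon n/m$, and consider the instance where $n$ ``heavy'' tasks are drawn independently from $\mathrm{Unif}[1,1+\varepsilon]$ and, when $m>n$, the remaining $m-n$ ``light'' tasks are drawn independently from $\mathrm{Unif}[0,\delta]$. Uniform distributions are continuous and MHR---the hazard rate of $\mathrm{Unif}[a,b]$ is $1/(b-x)$, which is strictly increasing on $[a,b)$---so this is a legitimate machine-identical Bayesian scheduling instance. Since each column's distribution is continuous and identical across machines, its column-wise minimum is almost surely unique and, by symmetry, equally likely to be attained at any of the $n$ machines, independently across columns; hence under \algoname{VCG} each task is placed uniformly at random in one of the machines, independently across tasks. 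Letting $M$ denote the maximum number of \emph{heavy} tasks assigned to a single machine, the classical uniform balls-in-bins bound (\citet{Raab:1998ab}) yields $\expect[M]\geq (1-o(1))\frac{\ln n}{\ln \ln n}$, and since every heavy task contributes at least $1$ to its machine, $\expect[\algoname{VCG}]\geq \expect[M]\geq (1-o(1))\frac{\ln n}{\ln \ln n}$.

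For the benchmark, the deterministic allocation that puts exactly one heavy task on each machine and spreads the light tasks as evenly as possible has, on every realization, makespan at most $(1+\varepsilon)+\lceil(m-n)/n\rceil\delta\leq 1+\varepsilon+m\delta/n\leq 1+2\varepsilon$ by the choice of $\delta$; hence $\expect[\opt]\leq 1+2\varepsilon$. Dividing and sending $\varepsilon,\delta\to 0$ gives the claimed $\Omega(\ln n/\ln \ln n)$ ratio. The only non-routine ingredient here is the sharp lower bound on the maximum load of uniform balls-in-bins (the Raab--Steger result), which I would import as a black box; everything else is a straightforward verification---that the MHR and continuity properties are preserved by the uniform slabs, that \algoname{VCG} genuinely reduces to uniform random placement in this continuous machine-identical setting, and that the parameters $\varepsilon,\delta$ can be calibrated so that the light tasks perturb neither $\algoname{VCG}$ nor $\opt$ by more than a vanishing amount.
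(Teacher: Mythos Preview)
Your proof is correct and follows essentially the same approach as the paper: construct $\Theta(n)$ ``heavy'' tasks near unit weight and pad with negligibly small tasks, then observe that \algoname{VCG} reduces to uniform balls-in-bins while \opt\ stays near~$1$, finally smoothing point masses into thin uniform intervals to ensure continuity and MHR. The only cosmetic differences are that the paper uses $n-1$ heavy tasks and sets the light weight to exactly $\tfrac{1}{m-n+1}$ (so that stacking all light tasks on one machine gives makespan exactly~$1$), whereas you use $n$ heavy tasks and generic $\delta$-light tasks spread evenly---neither choice affects the argument.
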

\begin{proof}
Consider an instance with $n$ identical machines and $m$ tasks where for any machine $i$, task $j$ has processing time $t_{ij}=1$ with probability $1$ for $j=1,\dots,n-1$ and processing time $t_{ij}=\frac{1}{m-n+1}$ with probability $1$ for $j=n,\ldots,m$.
From classical results from balls-in-bins analysis (see also the proof of Theorem~\ref{thm:ub_identical_machines}) we can deduce that the expected maximum number of unit-weight tasks allocated to any machine by VCG, is $\varOmega\left(\frac{\ln n}{\ln \ln n}\right)$. On the other hand, there exists an allocation that achieves a makespan equal to $1$, that is to allocate all of the $m-n+1$ ``small'' tasks to a single machine and allocate each of the remaining unit-cost tasks to a different machine.
The theorem follows by noticing that we can without loss replace these point-mass distributions on $1$ and $\frac{1}{m-n+1}$  with uniform distributions over small intervals around that points.
\end{proof}

Notice that when the number of tasks equals that of the machines, i.e.\ $m=n$,
then the bad instance for the lower bound of Theorem~\ref
{thm:lb-vcg-identical-machines} is in fact an i.i.d.\ instance where tasks
are identical as well and all $t_
{ij}$'s are drawn from the same distribution, and not just an instance with
only machines being identical. However, if we restrict our focus only on
discrete
distributions, then we can strengthen that lower bound to hold for i.i.d.\ distributions for essentially any number of tasks and not only for $m=n$:

\begin{theorem}\label{th:lower-bound-discrete}
For any number of $m=O(ne^n)$ tasks, there exists an instance of the Bayesian scheduling problem where VCG is not better than $\Omega\left(\frac{\ln n}{\ln \ln n}\right)$-approximate and the tasks have i.i.d. processing times drawn from a discrete MHR distribution.
\end{theorem}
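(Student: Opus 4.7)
The plan is to reuse the ``most tasks cheap, a few tasks unit'' intuition of Theorem~\ref{thm:lb-vcg-identical-machines} but realize both regimes inside a single i.i.d.\ draw. Concretely, I would take $\mathcal{D}$ supported on $\{\varepsilon,1\}$ with $\Pr[T=1]=p$ and $\Pr[T=\varepsilon]=1-p$, where $p:=(n/m)^{1/n}$ and $\varepsilon:=1/m^{2}$. The assumption $m=O(ne^{n})$ is exactly what guarantees that $p$ is a constant bounded away from $0$ (in fact $p\in[1/e,1]$), and the discrete MHR property is a one-line check: the hazard rates are $h(\varepsilon)=(1-p)/1=1-p$ and $h(1)=p/p=1$, so $h$ is non-decreasing on the support.

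Next I would split the realized instance into \emph{heavy} tasks $S_{1}=\{j:\min_i t_{ij}=1\}$ (those for which all $n$ coordinates drew $1$) and \emph{light} tasks $S_{\varepsilon}=[m]\setminus S_{1}$. Because $p^{n}=n/m$, we have $|S_{1}|\sim\mathrm{Binomial}(m,n/m)$, so $\E|S_{1}|=n$ and standard Chernoff bounds give $|S_{1}|\in[n/2,\,2n]$ except with exponentially small probability. Under VCG every heavy task is tied across all machines, so its tie-breaking is a uniform draw among the $n$ machines; every light task is allocated uniformly at random to one of its $\varepsilon$-valued coordinates, contributing at most $\varepsilon$ to whichever machine it lands on.

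For the VCG makespan, I would condition on $|S_{1}|\ge n/2$ and invoke the classical balls-in-bins lower bound (as in the proof of Theorem~\ref{thm:ub_identical_machines}, applied in reverse via \cite{Raab:1998ab}) to conclude that the maximum number of heavy tasks assigned to any single machine---and therefore the VCG makespan, since each heavy task carries weight~$1$---is $\Omega(\ln n/\ln\ln n)$ in expectation. For the optimum, OPT can place the heavy tasks as balanced as possible (load at most $\lceil|S_{1}|/n\rceil$) and route every light task to an $\varepsilon$-valued machine (contributing at most $m\varepsilon=1/m$ in total), yielding
\[
\E[\opt]\;\le\;\E\!\left[\lceil|S_{1}|/n\rceil\right]+1/m\;\le\;\E|S_{1}|/n+1+1/m\;=\;2+o(1).
\]
Dividing the two bounds gives the claimed $\Omega(\ln n/\ln\ln n)$ approximation ratio.

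The step requiring the most care is the conditional balls-in-bins argument: one has to separate the input randomness (which fixes which tasks are heavy and which $\varepsilon$-valued machines each light task can be sent to) from the independent, uniform tie-breaking randomness inside VCG, and then verify that the light-task mass is negligible in the maximum-load sense and not merely in expectation. Since the total light-task contribution summed over all machines is bounded by $m\varepsilon=1/m$, this dilution is absorbed trivially and the lower bound on the heavy tasks carries through untouched.
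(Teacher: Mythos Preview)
Your proposal is correct and follows the same construction as the paper: a two-point discrete MHR distribution calibrated so that the number of ``heavy'' tasks (those with all $n$ coordinates equal to~$1$) is $\Theta(n)$ in expectation, then balls-in-bins for the VCG lower bound and load-balancing the heavy tasks for the OPT upper bound; the paper uses support $\{0,1\}$ with $p=(n/(2m))^{1/n}$ in place of your $\{\varepsilon,1\}$ with $p=(n/m)^{1/n}$, which is cosmetic. One minor difference worth noting: the paper bounds $\E[\opt]$ via the Chernoff upper tail $\Pr[M\ge 3n]\le e^{-n}$ together with the hypothesis $m=O(ne^{n})$, whereas your direct bound $\E[\lceil |S_1|/n\rceil]\le \E[|S_1|]/n+1=2$ never invokes that hypothesis---so the place you claim to use $m=O(ne^{n})$ (to keep $p$ bounded away from~$0$) is in fact not needed anywhere in your argument, and your proof actually establishes the lower bound for all $m\ge n$.
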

\begin{proof}
Consider an instance with $n$ identical machines and $m$ tasks where the processing times $t_{ij}$ are drawn from  $\sset{0,1}$ such that $t_{ij}=1$ with probability $\left(\frac{n}{2m}\right)^{\frac{1}{n}}\equiv p$ and $t_{ij}=0$ with probability $1-p$. Notice that this is a well-defined distribution, since for all $m\geq n$ we have $p< 1$. Furthermore, it is easy to check that this distribution is MHR; its hazard rate at $0$ is $\frac{\probability{t_{ij}= 0}}{\probability{t_{ij}\geq 0 }}=\frac{1-p}{1}=1-p$ and at $1$ is $\frac{\probability{t_{ij}= 1}}{\probability{t_{ij}\geq 1 }}=\frac{p}{p}=1$.

Next, let $M$ be the random variable denoting the number of tasks whose best
processing time over all machines is non-zero, that is
$$
M=\cards{\ssets{j\fwhs{\min_i t_{ij}=1}}}.
$$
Then $M$ follows a binomial distribution with probability of success $p^n$ and
$m$ trials, since the probability of a task having processing time $1$ at \emph
{all} machines (success) is $p^n$, while there are $m$ tasks in total. Given the
definition for $p$, the average number of tasks that will end up requiring a
processing time of $1$ on every machine is $\expect[M]=mp^n=\frac{n}{2}$.
Also,
we can derive that
$$
\probability{M\geq 3n }\leq e^{-n}
\qquad\text{and}
\qquad
\probability{M\leq \frac{n}{7} }\leq e^{-n/8}
$$
using Chernoff bounds\footnote{Here we use the following forms, with
$\beta_1=1+\sqrt{5}$ and $\beta_2=\frac{\sqrt{2}}{2}$: 
for all $\beta_1 >0$ and $0<\beta_2 <1$,
$$\probability{X\geq(1+\beta_1)\mu}\leq e^{-\frac{\beta_1^2}{2+\beta_1}\mu}
\qquad\text{and}\qquad
\probability{X\leq(1-\beta_2)\mu}\leq e^{-\frac{\beta_2^2}{2}\mu},
$$
for any binomial random variable with mean $\mu$.}.
As we have argued before, we can use classical results from balls-in-bins
analysis to bound the performance of VCG. So, if $\frac{n}{7}<M<3n$, we know
that the expected makespan will be $\varOmega\left(\frac{\ln n}{\ln \ln
n}\right)$, since each task has processing time
$1$ on all machines. That event happens almost surely, with probability at least
$1-e^{-n}-e^
{-n/8}=1-o(1)$.

On the other hand, we next show that the mechanism that simply balances the $M$ ``expensive'' tasks across the machines (by allocating $\left\lceil\frac{M}{n} \right\rceil$ of them to every machine) achieves a constant makespan, hence providing a constant upper-bound on the optimal makespan:
$$
\E[\opt]
\leq 
\probability{M< 3n}\cdot \frac{3n}{n}\cdot 1 + \probability{M\geq 3n}\cdot
\left\lceil \frac{m}{n}\right\rceil \cdot 1
\leq 
3+e^{-n}\left(\frac{m}{n}+1\right)
\leq 4+\frac{m}{ne^{n}}=O(1).$$
\end{proof}
Notice however that Theorem~\ref{th:lower-bound-discrete} still leaves open the possibility for \emph{continuous} MHR distributions to perform better (see also Theorem~\ref{th:VCG_upper_iid_const} and Corollary~\ref{cor:bound4}).

We finally conclude with a couple of simple observations, for the sake of completeness.
First, our initial requirement (see Section~\ref{sec:preliminaries}) for identical machines (which is a standard one, see~\cite{Chawla:2013rm}) is crucial for guaranteeing any non-trivial approximation ratios on the performance of VCG:
\begin{observation}
There exists an instance of the Bayesian scheduling problem where VCG is not better than $n$-approximate even when the tasks are identically distributed according to continuous MHR distributions.
\end{observation}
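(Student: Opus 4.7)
The plan is to exhibit a family of bad instances in which VCG is forced to concentrate every task on a single faster machine, while the optimum spreads them one per machine. I would take $m=n$ tasks on $n$ machines, and for each task $j$ draw $t_{1j}$ from the uniform distribution on $[1-2\varepsilon,1-\varepsilon]$ and $t_{ij}$, $i\geq 2$, from the uniform distribution on $[1,1+\varepsilon]$, for a small parameter $\varepsilon>0$. The first step is to check admissibility: each marginal is continuous and MHR (the hazard rate of a uniform on $[a,b]$ is $1/(b-x)$, which is increasing in $x$), and the joint distribution over machines of any task's column is the same across tasks, so the tasks are identically distributed even though the machines are not.

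Next I would observe that, because the support of machine $1$'s distribution lies strictly below that of every other machine, with probability one machine $1$ is the unique column minimizer for every task, and hence VCG deterministically places all $n$ tasks on it. This makes $\E[\algoname{VCG}]$ equal to the expectation of the sum of $n$ draws from the uniform on $[1-2\varepsilon,1-\varepsilon]$, which is of order $n$ as $\varepsilon\to 0$. In contrast, the allocation that assigns task $j$ to machine $j$ is feasible and its expected makespan is bounded by the expected maximum of at most $n-1$ uniform draws on $[1,1+\varepsilon]$, which tends to $1$. The ratio $\E[\algoname{VCG}]/\E[\opt]$ therefore approaches $n$, giving the claimed non-approximability.

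I do not anticipate any real obstacle here. The only subtlety worth a sentence is that, since machines are allowed to be non-identical, the phrase \emph{identically distributed} must be interpreted at the level of task columns (as joint distributions over the machines) rather than per matrix entry; this is exactly what permits the construction to assign two different marginals to the two machine types while still satisfying the hypothesis of the observation.
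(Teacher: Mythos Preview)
Your construction is correct and is essentially the same as the paper's: make machine~$1$ strictly faster than the others via a small perturbation so that VCG piles everything onto it, while spreading one task per machine gives makespan close to~$1$. The only cosmetic differences are that the paper allows any $m$ with $n\mid m$ (using point masses $1-\varepsilon$ and $1$, then replacing them by short uniform intervals at the end), whereas you fix $m=n$ and work with disjoint uniform supports from the start; neither change affects the argument or the $n$ ratio.
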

\begin{proof}
Assume $\frac{m}{n}$ being an integer, and give as input the point-mass
distributions $t_{1j}=1-\varepsilon$ and $t_{ij}=1$ for all $j\in[m]$ and
$i=2,3,\dots,n$, where $\varepsilon\in(0,1)$. Notice that the execution times are indeed identical across the tasks. The $\algoname{VCG}$ mechanism allocates all jobs to machine $1$, for a makespan of $m\cdot (1-\varepsilon)$, while the algorithm that assigns $\frac{m}{n}$ jobs to each machine achieves a makespan of at most $\frac{m}{n}\cdot 1$, resulting to a ratio of $n$ as $\varepsilon \to 0$. Without loss, the above analysis carries over even if we replace the point-mass distributions with uniform distributions over a small interval around the values $1-\varepsilon$  and $1$. These distributions are MHR, which concludes the proof.
\end{proof}

We now present some lower bounds on the performance of the mechanisms analyzed
by \citet{Chawla:2013rm}. A definition of these mechanisms can be found in the introduction.
The following demonstrates that the analysis of the approximation ratio for the class of bounded overload mechanisms presented in \cite{Chawla:2013rm} is asymptotically tight:
\begin{observation}\label{obs:bo-lb-fixed}
For any number of $m\geq n$ tasks, there exists an instance of the Bayesian scheduling problem where a bounded overload mechanism with parameter $c$ is not better than $\min\{c\frac{m}{n},n-1\}$-approximate and the processing times are drawn from machine-identical continuous MHR distributions.
\end{observation}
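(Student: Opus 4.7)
The plan is to use essentially the same instance as in Theorem~\ref{thm:lb-vcg-identical-machines}: take $n-1$ \emph{heavy} tasks with processing times drawn from a continuous MHR distribution tightly concentrated around $1$ (e.g.\ uniform on $[1,1+\delta]$), together with $m-n+1$ \emph{light} tasks with processing times drawn from a continuous MHR distribution tightly concentrated around $\varepsilon$ (e.g.\ uniform on $[\varepsilon,\varepsilon+\delta]$), for small parameters $\delta,\varepsilon>0$. The expected optimal makespan is easy to upper bound by exhibiting an explicit schedule: place each of the $n-1$ heavy tasks on a distinct machine among $1,\dots,n-1$ and dump every light task on machine $n$. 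Its makespan is at most $\max\{1+\delta,\,(\varepsilon+\delta)(m-n+1)\}$, which tends to $1$ as $\varepsilon,\delta\to 0$, so $\mathbb{E}[\opt]\leq 1+o(1)$.

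For the bounded overload mechanism with parameter $c$, the crucial observation is that the objective $\sum_j t_{\alpha(j),j}$ differs between any two feasible allocations by at most $O(\delta\,m)$, since every realization is within $\delta$ of the mean of its task. Hence, as $\delta\to 0$, every feasible allocation becomes tied for the minimum total weight and the mechanism's output is determined solely by its tie-breaking rule. Taking the worst-case tie-breaking consistent with the specification, set $k:=\min\{n-1,\lfloor cm/n\rfloor\}$ and pack $k$ heavy tasks onto machine $1$, filling its remaining quota with light tasks if needed (feasibility is guaranteed whenever $c\geq 1$, which is the only regime where the mechanism is itself well-defined). The workload of machine $1$ is then at least $k\cdot(1-O(\delta))$, giving an expected mechanism makespan of at least $k-o(1)$. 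Dividing by the near-unit optimum yields an approximation ratio of at least $\min\{cm/n,\,n-1\}-o(1)$, which can be pushed arbitrarily close to the claimed bound by sending $\delta,\varepsilon\to 0$.

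The main obstacle is making the tie-breaking step rigorous, since the bounded overload mechanism as defined does not uniquely pin down an allocation when many allocations tie for the minimum weighted sum. The clean resolution is to interpret the lower bound as holding against the worst tie-breaking rule consistent with the specification; this is standard for impossibility results of this kind and is dual to how the matching upper bound in \cite{Chawla:2013rm} is proved (it must hold under \emph{every} tie-breaking rule, so a lower bound exhibiting one bad tie-breaking rule suffices to show asymptotic tightness).
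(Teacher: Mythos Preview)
Your proposal is correct and follows essentially the same line as the paper's own proof: both use the instance from Theorem~\ref{thm:lb-vcg-identical-machines} (heavy tasks near $1$, light tasks of negligible weight), upper-bound the optimum by roughly $1$ via the obvious balanced schedule, and then exploit adversarial tie-breaking in the bounded overload mechanism to pile $\min\{c\tfrac{m}{n},\,n-1\}$ heavy tasks onto a single machine. The only cosmetic differences are that the paper keeps the light tasks at weight $\tfrac{1}{m-n+1}$ rather than a separate parameter $\varepsilon$, and phrases the tie-breaking step more tersely by directly naming a lexicographic rule; your more explicit discussion of why tie-breaking is the crux is, if anything, more transparent than the paper's one-line appeal to ``fix an ordering of the machines.''
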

\begin{proof}
Consider the instance of Theorem~\ref{thm:lb-vcg-identical-machines} and recall that the optimal makespan is equal to $1$. We note that since each task has the same processing time at any machine, all possible allocations such that no machine is assigned to more than $c\frac{m}{n}$ tasks are valid outputs of bounded overload mechanisms with parameter $c$. Now consider the bounded overload mechanism which fixes an ordering of the machines and then breaks ties according to that ordering. This mechanism would allocate at least $\min\{c\frac{m}{n},n-1\}$ unit-cost tasks on the first machine in its ordering.
\end{proof}

The same instance can be used to bound the performance of the bounded overload mechanism with parameter $c$ that breaks ties uniformly at random as well. Having sufficiently many tasks ($m=\Omega\left(\frac{n\ln n}{\ln \ln n}\right)$) implies that the mechanism behaves almost like the VCG mechanism while allocating the unit-cost tasks, assuming they are the first to be allocated. This gives a lower bound of $\Omega\left(\frac{\ln n}{\ln \ln n}\right)$ on the approximation ratio of this mechanism as well.

Similar instances can provide lower bounds on the performance of the class of sieve and bounded overload mechanisms with parameters $c,\beta,$ and $\delta$, even for the case of i.i.d. processing times. To see this notice that if all tasks have $t_{ij}=1$ with probability $1$ on any machine ($T[1:k]=1$ for any $k$), and we choose threshold $\beta<1$ as is done in \cite{Chawla:2013rm} for the case $m\leq n \ln n$, then a sieve and bounded overload mechanism with parameters $c,\beta\leq 1,$ and $\delta$ immediately reduces to a bounded overload mechanism with parameter $c$ on $\delta n $ machines.

\paragraph{Acknowledgements:} We want to thank Elias Koutsoupias for useful discussions.

\bibliographystyle{abbrvnat}
\bibliography{BayesianScheduling}

\appendix
\section{Omitted Proofs from Section \ref{sec:preliminaries}}\label{append:preliminaries}

\begin{customlemma}{\ref{lemma:first_order_statistic_MHR}}
If $T$ is a continuous MHR random variable then for any positive integer $n$, its first order statistic $T[1:n]$ is also MHR.
\end{customlemma}
\begin{proof}
If $T$ is a continuous real random variable with cdf $F$ and pdf $f$, then the cdf and pdf of $T[1:n]$ are $F^{(1)}(x)=1-(1-F(x))^n$ and $f^{(1)}(x)=nf(x)(1-F(x))^{n-1}$, respectively. So, the hazard rate of $T[1:n]$ is
$$
\frac{f^{(1)}(x)}{1-F^{(1)}(x)}=\frac{n(1-F(x))^{n-1}f(x)}{(1-F(x))^n}=n\frac{f(x)}{1-F(x)},
$$
which is increasing since $\frac{f(x)}{1-F(x)}$ is increasing.
\end{proof}

\begin{customlemma}{\ref{lemma:MHR_squares}}
For any \emph{continuous} MHR random variable $X$ and any positive integer $r$, $\expect[X^r]\leq r!\expect[X]^r$.
\end{customlemma}
\begin{proof}
For any positive integer $s$, denote the normalized moments $\lambda_s\equiv\frac{\expect[X^s]}{s!}$. Then from \cite[p.~384]{Barlow:1963fk} we know that for all integers $i$ and $t>s>0$,
$$
\left(\frac{\lambda_{i+t}}{\lambda_i}\right)^s\leq \left(\frac{\lambda_{i+s}}{\lambda_i}\right)^t.
$$
By selecting $t=r$, $s=1$ and $i=0$, this inequality gives $\lambda_r \lambda_0^{r-1}\leq \lambda_1^r$. We get the desired inequality by noticing that $\lambda_r=\expect[X^r]/r!$, $\lambda_1=\expect[X]$ and $\lambda_0=\expect[1]=1$.
\end{proof}
The continuity assumption in Lemma~\ref{lemma:MHR_squares} is essential, as it is demonstrated by the following example: consider a discrete random variable $X$ over $\sset{0,1}$ with $\probability{X=0}=\frac{1}{2}+\varepsilon$ and $\probability{X=1}=\frac{1}{2}-\varepsilon$, for some small $\varepsilon>0$. This distribution is MHR since its hazard rate at $0$ and $1$ respectively is $h(0)=\frac{\probability{X= 0}}{\probability{X\geq 0 }}=\frac{1}{2}+\varepsilon$ and $h(1)=\frac{\probability{X= 1}}{\probability{X\geq 1 }}=1$. However, it is easy to see that $\expect[X^2]=\expect[X]=\probability{X=1}=\frac{1}{2}-\varepsilon$ and thus $\frac{\expect[X]^2}{\expect[X^2]}=\expect[X]<\frac{1}{2}$.

\section{Proof of Corollary \ref{cor:exp_uniform}}
\label{append:expo_uniform_computations}
Throughout this section we will use the fact that if $T$ is a random variable with cdf $F$ then for any positive integer $n$, the cdf's of the first and last order statistics $T[1:n]$ and $T[n:n]$ are  given by
\begin{equation}
\label{eq:order_statistics_pdf}
F^{(1)}(x)=1-(1-F(x))^n \quad\text{and}\quad F^{(n)}(x)=F^{n}(x),
\end{equation}
respectively.
\begin{lemma}
\label{lemma:stretch-uniform}
If $T$ is a uniform random variable over $[0,1]$, then for all positive integers $n,m$
$$
\expect[T[1:n]]=\frac{1}{n+1}
\quad\text{and}\quad
\expect[T[1:n][m:m]]= 1-mB\left(m,1+\frac{1}{n}\right),
$$
where $B(x,y)\equiv \int_0^1 t^{x-1}(1-t)^{y-1}\,dt$ is the beta function.
\end{lemma}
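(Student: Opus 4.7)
The plan is to compute both expectations directly from the order-statistic cdfs given in equation (\ref{eq:order_statistics_pdf}), specialized to the uniform case where $F(x)=x$ on $[0,1]$.

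For $\expect[T[1:n]]$, the first order statistic has pdf $f^{(1)}(x)=n(1-x)^{n-1}$ on $[0,1]$. The expectation $\int_0^1 x\cdot n(1-x)^{n-1}\,dx$ is either a standard Beta-function evaluation ($nB(2,n)$) or can be done by parts, yielding $\frac{1}{n+1}$.

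For $\expect[T[1:n][m:m]]$, from equation (\ref{eq:order_statistics_pdf}) the cdf is $\bigl(1-(1-x)^n\bigr)^m$ on $[0,1]$. Since this random variable is supported on $[0,1]$, I would use the tail formula
\[
\expect[T[1:n][m:m]]=\int_0^1\left(1-\bigl(1-(1-x)^n\bigr)^m\right)dx.
\]
The substitution $u=1-x$ removes the outer complement, and then $v=u^n$ (so $u=v^{1/n}$ and $du=\frac{1}{n}v^{1/n-1}dv$) converts the integral into Beta-function form:
\[
\frac{1}{n}\int_0^1 v^{1/n-1}\bigl(1-(1-v)^m\bigr)dv = 1-\frac{1}{n}B\!\left(\tfrac{1}{n},m+1\right).
\]

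The final step is a cosmetic Beta/Gamma identity to match the stated form. Using $\tfrac{1}{n}\Gamma(1/n)=\Gamma(1+1/n)$ and $\Gamma(m+1)=m\,\Gamma(m)$,
\[
\frac{1}{n}B\!\left(\tfrac{1}{n},m+1\right)=\frac{\Gamma(1+1/n)\,\Gamma(m+1)}{\Gamma(m+1+1/n)}=m\cdot\frac{\Gamma(m)\,\Gamma(1+1/n)}{\Gamma(m+1+1/n)}=m\,B\!\left(m,1+\tfrac{1}{n}\right),
\]
which yields the claimed identity. The only step requiring a bit of care is the change-of-variables bookkeeping and the final $\Gamma$-function rewriting; the whole argument is otherwise routine.
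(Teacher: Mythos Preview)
Your argument is correct. Both parts go through: the first-order-statistic expectation is standard, and your two substitutions $u=1-x$, $v=u^n$ are sound and indeed produce $1-\tfrac{1}{n}B(\tfrac{1}{n},m+1)$, after which the Gamma identities you quote rewrite this as $1-mB(m,1+\tfrac{1}{n})$.

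The paper takes a different route for the second expectation. Instead of a direct substitution into Beta form, it defines $I(m)=\int_0^1[1-(1-x)^n]^m\,dx$ and, via an integration-by-parts trick, derives the recurrence $I(m)=\frac{nm}{nm+1}\,I(m-1)$ with $I(1)=\frac{n}{n+1}$; unrolling this product and recognizing the ratio of Gamma functions yields $I(m)=mB(m,1+\tfrac{1}{n})$. Your change-of-variables argument is shorter and more transparent, landing on the Beta integral in one step rather than reconstructing it from a telescoping product; the paper's recurrence, on the other hand, is perhaps more self-contained in that it never needs to manipulate $\Gamma$-function identities until the very last line.
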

\begin{proof}
If $T$ is a uniformly distributed random variable over $[0,1]$ then its cdf is given by $F(x)=x$, $x\in[0,1]$.
The first equality is very easy, since from~\eqref{eq:order_statistics_pdf} the cdf of $T[1:n]$ is $1-(1-x)^n$, thus its expectation is $\int_0^1(1-x)^n\,dx=\frac{1}{n+1}$. For the second one, again from~\eqref{eq:order_statistics_pdf} it is straightforward to see that the cdf of $T[1:n][m:m]$ is $[1-(1-x)^n]^m$ so its expectation is $\int_0^1 1-[1-(1-x)^n]^m\,dx=1-\int_0^1 [1-(1-x)^n]^m\,dx$. Next we compute the value of this integral
$$I(m)\equiv \int_0^1 [1-(1-x)^n]^m\,dx.$$
We have:
\begin{align*}
I(m) 	&=\int_0^1[1-(1-x)^n]^{m-1}(1-(1-x)^n)\,dx\\
		&=I(m-1)-\int_0^1[1-(1-x)^n]^{m-1}(1-x)^{n}\,dx\\
		&=I(m-1)-\frac{1}{n}\int_0^1[1-(1-x)^n]^{m-1}\left(1-(1-x)^{n}\right)'(1-x)\,dx\\
		&=I(m-1)-\frac{1}{nm}\int_0^1\left([1-(1-x)^n]^{m}\right)'(1-x)\,dx\\
		&=I(m-1)-\frac{1}{nm}\left[\left(1-(1-x)^n\right)^{m}(1-x)\right]_{x=0}^{x=1}+\frac{1}{nm}\int_0^1[1-(1-x)^n]^{m}(1-x)'\,dx\\
		&=I(m-1)-\frac{1}{nm}\int_0^1[1-(1-x)^n]^{m}\,dx\\
		&=I(m-1)-\frac{1}{nm}I(m),
\end{align*}
meaning that
$$
I(m)=\frac{1}{1+\frac{1}{nm}}I(m-1)\quad\text{with}\quad I(1)=\int_0^1 1-(1-x)^n\,dx=1-\frac{1}{n+1}.
$$
Solving the above recurrence gives
$$
I(m)=\frac{2\cdot 3\cdot \dots \cdot m}{\left(1+\frac{1}{n}\right)\cdot \left(2+\frac{1}{n}\right)\cdot \dots \cdot \left(m+\frac{1}{n}\right)}=\frac{m\varGamma(m)\varGamma\left(1+\frac{1}{n}\right)}{\varGamma\left(m+1+\frac{1}{n}\right)}=mB\left(m,1+\frac{1}{n}\right),
$$
where $\varGamma$ denotes the (complete) gamma function.
\end{proof}

\begin{lemma}
\label{lemma:stretch-expo}
If $T$ is an exponentially distributed random variable with parameter $\lambda$, then for all positive integers $n,m$
$$
\expect[T[1:n]]=\frac{1}{\lambda n}
\quad\text{and}\quad
\expect[T[1:n][m:m]]= \frac{H_m}{\lambda n},
$$
where $H_m=1+\frac{1}{2}+\dots+\frac{1}{m}$ is the harmonic function.
\end{lemma}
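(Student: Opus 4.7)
The plan is essentially to reduce the statement to the well-known fact that the maximum of $m$ i.i.d.\ exponentials with rate $\mu$ has expectation $H_m/\mu$. The whole proof should be short and mirror the previous lemma in structure.

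First, I would compute $\E[T[1:n]]$ directly from the order-statistic formula~\eqref{eq:order_statistics_pdf}. Since the exponential distribution with parameter $\lambda$ has cdf $F(x)=1-e^{-\lambda x}$ for $x\geq 0$, we immediately get
\[
F^{(1)}(x)=1-(1-F(x))^n=1-e^{-\lambda n x},
\]
i.e.\ $T[1:n]$ is itself exponentially distributed with parameter $\lambda n$. The first equality $\E[T[1:n]]=1/(\lambda n)$ then follows from the standard expectation formula for the exponential distribution.

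Second, for the maximum, let me write $\mu\equiv \lambda n$ and let $X_1,\dots,X_m$ be i.i.d.\ copies of $T[1:n]\sim \mathrm{Exp}(\mu)$. I would use the tail integration identity
\[
\E[\max_k X_k]=\int_0^\infty \bigl(1-F^{(n)\,m}(x)\bigr)\,dx=\int_0^\infty\bigl(1-(1-e^{-\mu x})^m\bigr)\,dx,
\]
and perform the substitution $u=1-e^{-\mu x}$, $du=\mu(1-u)\,dx$, which turns the integral into
\[
\frac{1}{\mu}\int_0^1 \frac{1-u^m}{1-u}\,du=\frac{1}{\mu}\int_0^1\sum_{k=0}^{m-1}u^k\,du=\frac{1}{\mu}\sum_{k=1}^{m}\frac{1}{k}=\frac{H_m}{\lambda n}.
\]

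No step is really an obstacle here; the only thing to be slightly careful about is cleanly invoking~\eqref{eq:order_statistics_pdf} twice (once to identify $T[1:n]$ as exponential, and once to write the cdf of its $m$-th order statistic), and then recognizing the geometric-series simplification $\frac{1-u^m}{1-u}=\sum_{k=0}^{m-1}u^k$ that produces the harmonic sum. As an optional alternative, one could mention that the same conclusion follows from the memoryless property: the minimum of $k$ i.i.d.\ $\mathrm{Exp}(\mu)$ variables is $\mathrm{Exp}(k\mu)$, so by iteratively stripping off minima one obtains $\E[\max]=\sum_{k=1}^m 1/(k\mu)=H_m/\mu$.
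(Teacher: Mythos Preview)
Your proof is correct and essentially identical to the paper's: both identify $T[1:n]$ as $\mathrm{Exp}(\lambda n)$ via~\eqref{eq:order_statistics_pdf}, write the tail integral $\int_0^\infty\bigl(1-(1-e^{-\lambda n x})^m\bigr)\,dx$, and reduce it to $\frac{1}{\lambda n}\int_0^1\frac{1-u^m}{1-u}\,du=H_m/(\lambda n)$ via the geometric-series identity. The only cosmetic difference is that the paper performs the substitution in three steps ($y=\lambda n x$, $z=e^{-y}$, $w=1-z$) whereas you collapse these into the single change of variables $u=1-e^{-\mu x}$.
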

\begin{proof}
If $T$ is exponentially distributed, then its cdf is given by $F(x)=1-e^{-\lambda x}$, $x\in[0,\infty)$, where $\lambda $ is a positive real parameter.
The first equality is again easy, since from~\eqref{eq:order_statistics_pdf} the cdf of $T[1:n]$ is $1-(e^{-\lambda x})^n$, thus its expectation is $\int_0^\infty e^{-n\lambda\cdot x}\,dx=\frac{1}{\lambda n}$. For the second one,
from~\eqref{eq:order_statistics_pdf} it is straightforward to see that the cdf
of $T[1:n][m:m]$ is $(1-e^{-\lambda n x})^m$ so its expectation is
\begin{align*}
\int_0^\infty 1-(1-e^{-\lambda n x})^m\,dx 	&= \frac{1}{\lambda n}\int_0^\infty 1-(1-e^{-y})^m\,dy, &&\text{by changing $y=\lambda n x$},\\
		&= \frac{1}{\lambda n}\int_0^1 \frac{1-(1-z)^m}{z}\,dz, &&\text{by changing $z=e^{-y}$},\\
		&= \frac{1}{\lambda n}\int_0^1 \frac{1-w^m}{1-w}\,dw, &&\text{by changing $w=1-z$},\\
		&= \frac{1}{\lambda n}\int_0^1 \sum_{k=0}^{m-1}w^k\,dw\\
		&= \frac{1}{\lambda n}\sum_{k=0}^{m-1}\frac{1}{k+1}\\
		&=	\frac{H_m}{\lambda n}.
\end{align*}
\end{proof}

To conclude the proof of Corollary~\ref{cor:exp_uniform}, from Lemma~\ref{lemma:stretch-uniform} we deduce that the stretch factor of the uniform distribution is
$$
\frac{\expect[T[1:n][n:n]]}{\expect[T[1:n]]}=(n+1)\left[1-nB\left(n,1+\frac{1}{n}\right)\right]
$$
and from Lemma~\ref{lemma:stretch-expo} the stretch factor for the exponential distribution with parameter $\lambda$ is
$$
\frac{\expect[T[1:n][n:n]]}{\expect[T[1:n]]}=H_{n}.
$$
It can be verified that both the above quantities are lower-bounded by $\ln n$.
\end{document}